\newcommand{\hide}[1]{ }
\newcommand{\ant}{\mathsf{ant}}
\newcommand{\symmetry}{\textsc{symmetry}}
\newcommand{\cograph}{\textsc{cograph}}
\newcommand{\distHereditary}{\textsc{dist-hereditary}}
\newcommand{\permutation}{\textsc{permutation}}
\newcommand{\id}{\ensuremath{\mathsf{id}}}
\newcommand{\prot}{\ensuremath{\mathsf{prot}}}
\newcommand{\dM}{\ensuremath{\mathsf{dM}}}
\newcommand{\dAM}{\ensuremath{\mathsf{dAM}}}
\newcommand{\dMA}{\ensuremath{\mathsf{dMA}}}
\newcommand{\dMAM}{\ensuremath{\mathsf{dMAM}}}
\newcommand{\dAMA}{\ensuremath{\mathsf{dAMA}}}
\newcommand{\cL}{\mathcal{L}}
\newcommand{\cO}{\mathcal{O}}
\newcommand{\config}{\langle G, \id, I\rangle}
\DeclareMathSymbol{\shortminus}{\mathbin}{AMSa}{"39}
\newcommand{\priv}{\mathrm{priv}}
\newcommand{\pub}{\mathrm{pub}}
\newcommand{\poly}{\mathrm{poly}}
\newcommand{\eps}{\varepsilon}
\newcommand{\polylog}{\mathrm{poly}\log}
\newcommand{\soundness}{\textbf{\textsf{Soundness. }}}
\newcommand{\completeness}{\textbf{\textsf{Completeness. }}}
\newcommand{\lipics}[1]{\textcolor{lipicsGray}{\textbf{\textsf{#1}}}}
\title{Compact Distributed Interactive Proofs for the Recognition of Cographs and Distance-Hereditary Graphs}
\titlerunning{Distributed Interactive Proofs for Cographs and Distance-Hereditary Graphs} 
\author{Pedro Montealegre\footnote{Corresponding author}}{Facultad de Ingenier\'{\i}a y Ciencias, Universidad Adolfo Ib\'a\~nez, Chile}{p.montealegre@uai.cl}{}{}
\author{Diego Ram\'{\i}rez-Romero}{Departamento de Ingenier\'{\i}a Matem\'atica, Universidad de Chile, Chile}{dramirez@dim.uchile.cl}{}{}
\author{Ivan Rapaport}{DIM-CMM (UMI 2807 CNRS), Universidad de Chile, Chile}{rapaport@dim.uchile.cl}{}{}
\authorrunning{P. Montealegre, D. Ram\'{\i}rez and I. Rapaport} 
\keywords{Distributed interactive proofs, Distributed verification, Cographs, Distance-hereditary graphs} 
\begin{document}
\maketitle

\begin{abstract}

We present compact distributed interactive proofs for the recognition of two important graph classes, well-studied in the context of centralized algorithms, namely \emph{complement reducible graphs} and \emph{distance-hereditary graphs}. Complement reducible graphs (also called \emph{cographs}) are defined as the graphs not containing a four-node path $P_4$ as an induced subgraph. Distance-hereditary graphs are a super-class of cographs, defined as the graphs where the distance (shortest paths) between any pair of vertices is the same on every  induced connected subgraph. 

First, we show that there exists a distributed interactive proof for the recognition of cographs with two rounds of interaction. More precisely, 
we give a \dAM\ protocol with a proof size of $\cO(\log n)$ bits that uses shared randomness and recognizes cographs with high probability. Moreover, our protocol can be adapted to verify any Turing-decidable predicate restricted to cographs in $\dAM$ with certificates of size $\cO(\log n)$.

Second, we give a three-round,
\dMAM\ interactive protocol for the recognition of distance-hereditary graphs, still with a proof size of $\cO(\log n)$ bits and also using shared randomness. 

Finally, we show that any one-round (denoted \dM) or two-round, \dMA\ protocol for the recognition of cographs or distance-hereditary graphs requires certificates of size $\Omega(\log n)$ bits. Moreover, we show that any constant-round \dAM\ protocol using shared randomness requires certificates of size $\Omega(\log \log n)$.

\end{abstract}

\newpage

\section{Introduction}\label{sec:introduction}

The study of graph classes provides important insights to address basic graph problems such as coloring, maximum independent set,  
dominating set, etc. Indeed, as such problems are hard in general, it is extremely important to identify structural properties 
of specific instances which can be exploited in order to design efficient algorithms. 

A well-known example is the class of perfect graphs \cite{brandstadt1999graph,10.5555/984029}, that is to say, the class of graphs satisfying that the chromatic number equals the size of the largest clique of every induced subgraph. Many NP-complete problems on general graphs, such as coloring, maximum clique, and maximum independent set,  can be solved in polynomial-time when the input is  known to be a perfect graph \cite{Grtschel1993}. It is therefore very important to efficiently \emph{check the membership} of a graph to a given class. Through this checking procedure  we make sure that the  execution is performed in the right type of input, avoiding  erroneous computations.
 
 In this work we are interested in two subclasses of perfect graphs, namely \emph{complement reducible graphs} and \emph{distance-hereditary graphs}. The class of complement reducible graphs, or simply \emph{cographs}, has several equivalent definitions, as it has been re-discovered in many different contexts \cite{Corneil1981,Jung1978,Seinsche1974, Sumner1974}. A graph is a cograph if it does not contain a four-node path $P_4$ as an induced subgraph. Equivalently, a graph is a cograph if it can be generated recursively from a single vertex by complementation and disjoint-union.  A graph is a \emph{distance-hereditary graph} if the distance between any two vertices is the same on every connected induced subgraph \cite{HOWORKA1977}. An equivalent definition is that every path between two vertices is a shortest path. It is known that every cograph is a distance-hereditary graph. 

Many NP-complete problems are solvable in polynomial-time, or even linear time, when restricted to cographs and distance-hereditary graphs. For instance, maximum clique, maximum independent set, coloring (as distance-hereditary graphs are perfect \cite{HOWORKA1977}), hamiltonicity~\cite{Hung2005}, Steiner tree and connected domination~\cite{DAtri1988}, computing the tree-width and minimum fill-in~\cite{Broersma2000}, among others. By a result of  Courcelle, Makowsky and Rotics~\cite{Courcelle2000}, every decision problem expressible in a type of monadic second order logic can be solved in linear time on distance-hereditary graphs. Observe that all these results also apply to cographs. Other problems, like  graph isomorphism, can be solved in linear time on cographs~\cite{Corneil1981}.

In the centralized setting, both cographs and distance-hereditary graphs can be recognized in linear time \cite{Damiand2001}.  They have also  been addressed in the context of other computation models such as parallel random access machine \cite{dahlhaus1995efficient} or the broadcast congested clique model \cite{kari2015solving,montealegre2020graph} (see the related work section for more details). 

In this work, we focus on the recognition of these classes in the model of distributed interactive proofs. We show that both classes can be recognized with compact certificates and constant (two or three) rounds of interaction. Moreover, our protocol for cographs allows an algorithm to decide any Turing-decidable predicate restricted to cographs in $\dAM[\log n]$.

\subsection{Distributed Interactive Proofs}

Distributed decision refers to the task in which the nodes of a connected graph $G$ have to collectively decide (whether $G$ satisfies) some graph property~\cite{naor1995can}. For performing any such task,
the nodes exchange messages through the edges of $G$. The input of distributed decision problems may also include labels given to the nodes and/or to the edges of $G$. For instance, the nodes could decide whether $G$ is properly colored, or decide whether the graph belongs to a given graph-class.
 
Acceptance and rejection are defined as follows. 
If $G$ satisfies the property, then all nodes must accept; otherwise, at least one node must reject.
This type of algorithm could be used in distributed fault-tolerant computing, where the nodes, with some regularity,
must check whether the current network configuration is in a legal state for some Boolean predicate~\cite{korman2010proof}. Then, if the configuration becomes illegal at some point, the rejecting node(s) raise the alarm or launch a recovery procedure.

Deciding whether a given coloring is proper can be done locally, by exchanging messages between neighbors. These types of properties
are called {\emph{locally decidable}}. Nevertheless, some other properties, such as deciding whether $G$ is a simple path, are not. In fact, this simple observation implies that the recognition of a distance-hereditary graph is not locally-decidable. 
 As a remedy, the notion of {\emph{proof-labeling scheme}} (PLS) was introduced~\cite{korman2010proof}. Similar variants were also introduced: non-deterministic local decisions~\cite{fraigniaud2013towards}, locally checkable proofs~\cite{goos2016locally}, and others.
 
 Roughly speaking, in all these models, a powerful prover gives to every node $v$ a certificate $c(v)$.
This provides $G$ with a global distributed-proof. Then, every node $v$ performs a
local verification using its local information together with $c(v)$. PLS can be seen as a distributed counterpart to the class NP,
where, thanks to nondeterminism, the power of distributed algorithms increases.


	Just as it happened in the centralized framework~\cite{goldreich1991proofs,goldwasser1989knowledge},
a natural step forward is to consider a model where the nodes are allowed to have {\emph{more than one interaction round}} with the prover. Interestingly, there is no gain when interactions are all deterministic. When there is no randomness, the prover, from the very beginning, has all the information required to simulate the interaction with the nodes. Then, in just one round, the prover could simply send to each node the transcript of the whole communication, and the nodes simply verify that the transcript is indeed consistent. A completely different situation occurs when the nodes have access to some kind of randomness~\cite{baruch2015randomized,fraigniaud2019randomized}. In that case, the exact interaction with the nodes is unknown to the prover until the nodes communicate the realization of their random variables. Adding a randomized phase to the non-deterministic phase gives more power to the model~\cite{baruch2015randomized, fraigniaud2019randomized}.
The notion of \emph{distributed interactive protocols} was introduced by Kol, Oshman, and Saxena in~\cite{kol2018interactive} and further studied in~\cite{crescenzi2019trade,fraigniaud2019distributed, montealegre2020shared,naor2020power}. In such protocols, a centralized non- trustable prover with unlimited computation power, named Merlin, exchanges messages with a randomized distributed algorithm, named Arthur. Specifically, Arthur and Merlin perform a sequence of exchanges during which every node queries Merlin by sending a random bit-string, and Merlin replies to each node by sending a bit-string called proof. Neither the random strings nor the proofs need to be the same for each node. After a certain number of rounds, every node exchanges information with its neighbors in the network, and decides (i.e., it outputs accept or reject).  For instance, a \dMAM\ protocol involves three interactions: Merlin provides a certificate to Arthur, then Arthur queries Merlin by sending a random string. Finally, Merlin replies to the query by sending another certificate. Recall that this series of interactions is followed by a phase of distributed verification performed between every node and its neighbors.

When the number of interactions is $k$ we refer to $\dAM[k]$ protocols (if the last player is Merlin) and $\dMA[k]$ protocols (otherwise). For instance, $\dAM[2] = \dAM$, $\dMA[3] = \dAMA$, etc. Also, the scenario of distributed verification, where there is no randomness and only Merlin interacts, corresponds to $\dAM[1]$, which we denote by \dM. In other words, \dM\ is the PLS model.
In distributed interactive proofs, Merlin tries to convince the nodes that $G$ satisfies some property in a small number of rounds and through short messages. We say that an algorithm uses $\cO(f(n))$ bits if the messages exchanged between the nodes (in the verification round) and also the messages exchanged between the nodes and the prover are upper bounded by $\cO(f(n))$. We include this {\emph{bandwidth bound}} in the notation, which becomes $\dMA[k,f(n)] $ and $\dAM[k,f(n)]$
for the corresponding protocols.

	It is known that all Turing-decidable predicates on graphs admit a proof-labeling scheme with certificates of size $\cO(n^2)$ bits. Interestingly, some distributed problems are hard, even when a powerful prover provides the nodes with certificates.
It is the case of \symmetry, the language of graphs having a non-trivial automorphism (i.e., a non-trivial one-to-one mapping from the set of nodes to itself preserving edges). Any 
proof-labeling scheme recognizing \symmetry\ requires certificates of size $\Omega(n^2)$~\cite{goos2016locally}. However, many problems requiring $\Omega(n^2)$-bit certificates in any PLS, such as \symmetry, admit distributed interactive protocols with small certificates, and very few interactions. In fact, 
$\symmetry$ is in both $\dMAM[\log n]$ and $\dAM[n \log n]$ using shared randomness~\cite{kol2018interactive}, with the former result being tight for 
that type of randomness~\cite{montealegre2020shared}.

\subsection{Our Results}
In Section~\ref{sec:cograph} we show that the recognition of cographs is in $\dAM[\log n]$ using shared randomness. Our result consists of adapting an algorithm given in \cite{kari2015solving,montealegre2020graph}, originally designed for the Broadcast Congested Clique model. In this regard, we exploit the natural high connectivity of this class, combined with the use of non-determinism in order to route all messages in the network to a leader node, which is delegated to act as a \emph{referee}. In fact, our protocol allows this leader to learn all the edges of the input graph. We use this fact to show that  any Turing-decidable predicate restricted to cographs is decidable in $\dAM[\log n]$ using shared randomness. 
Then, in Section~\ref{sec:dist_hed}, we adapt the protocol described in the previous section, and we combine it with a set of tools related to the structure of distance-hereditary graphs 
in order to show that the recognition of this class is in $\dMAM^{\pub}[\log n]$. In this case, we are not able to show that all the information can be gathered in a single node representing the referee. Instead, we find a way to verify each step of the computation of the referee in a distributed manner, by choosing nodes that can receive (with the help of the prover) all necessary messages for performing the task. 
Finally, in Section~\ref{sec:lower}, we show that any \dM\ or \dMA\ protocol for the recognition of cographs or distance-hereditary graphs requires messages of size at least $\Omega(\log n) $. Our results are obtained extending a lower-bound technique described in~\cite{goos2016locally}, for the detection of a single leader in the context of locally checkable proofs.

Interestingly, all the protocols given in previous sections use shared randomness. We show that any constant-round $\dAM$ protocol using shared randomness for the previous problems requires messages of size at least $\Omega(\log\log n)$.

\subsection{Related Work}

The recognition of cographs and distance-hereditary graphs has been studied thoroughly in the parallel setting, where both problems have been shown to be in \textsf{NC}~\cite{dahlhaus1995efficient, kirkpatrick1990parallel, he1993parallel}. The currently best algorithms for the recognition of both classes run in time $\cO(\log^2 n)$ and using a linear number of processors  in a CREW-PRAM \cite{dahlhaus1995efficient}. 
On the other hand, there exist fast-parallel algorithms for NP-hard problems restricted to cographs and distance-hereditary graphs  \cite{HSIEH1999, Lin1990}.

Both recognition problems have also been addressed in the \textsf{One-Round  Broadcast Congested Clique Model} (\textsf{1BCC}), also known as the \textsf{Distributed Sketching Model} \cite{DBLP:conf/podc/AssadiKO20}. In this model, the nodes of a graph send a single message to a \emph{referee}, which initially has no information about the graph and, only using the received messages, has to decide a predicate of the input graph. 
	In \cite{kari2015solving}, a public-coin randomized protocol for recognizing cographs is obtained. In~\cite{montealegre2020graph}, randomized protocols recognizing both classes of  graphs  are given. Interestingly, these  protocols not only recognize the classes but \emph{reconstruct them}, meaning that the referee learns all the edges of the input graph. Also, the structural properties of distance-hereditary graphs have been used in the design of compact routing tables for 
interconnection networks~\cite{cicerone2001compact}.
Regarding local certification, other results on the recognition of graph classes include planar graphs~\cite{feuilloley2020compact} and graphs with bounded genus~\cite{feuilloley2020local}, where the authors showed that both classes admit proof-labeling schemes with logarithmic-sized certificates. 
Recently, Naor, Parter and Yogev defined in \cite{naor2020power} a \emph{compiler} which \lipics{(1)} turns any problem solved in \textsf{NP} in time $\tau(n)$ into a $\dMAM$ protocol using private randomness and bandwidth $\tau(n) \log n/n$ and; \lipics{(2)} turns any problem which can be solved in \textsf{NC} into a $\dAM$ protocol with private randomness, $\poly\log n$ rounds of interaction and bandwidth $\polylog n$. For example, this result implies that, any class of sparse graphs that can be recognized in linear time, can also be recognized by a $\dMAM$ protocol with logarithmic-sized certificates. 
Observe that, while it is known that cographs and distance-hereditary graphs can be recognized in \textsf{NC}~\cite{dahlhaus1995efficient} and in linear time in the centralized setting~\cite{Damiand2001}, our protocols beat the performance of the compiler. In fact, both graph classes can have 
$\Theta(n^2)$ edges and, therefore, the use of the compiler shows that the recognition of these classes is in $\dMAM[n \log n]$ or in $\dAM[\poly \log n, \poly \log n]$. The protocols given in this work achieve a bandwidth cost of $\cO(\log n)$, using shared randomness and with two and three rounds of interaction respectively.
In \cite{crescenzi2019trade, montealegre2020shared} the role of shared and private randomness in distributed interactive proofs  is studied. In particular, it is shown that  $\dAM$ protocols with private randomness are in general more powerful than those using only a shared coin, up to a constant increase on the error probability and a logarithmic term in the size of the certificates. This is interesting because the recognition protocols given in this article use only shared randomness, i.e. the \emph{weakest form of randomness}. 

\section{Preliminaries}\label{sec:model-def}

We start giving some graph-theoretic background, and then we formally define distributed interactive proofs, together with the problems we intend to solve.

\subsection{Background on Cographs and Distance-Hereditary Graphs}
All the graphs in this paper are simple undirected graphs, which is a pair of finite sets $G =(V, E)$ where $V$ is called vertex set, and $E$ is a subset of the 2-sets of  $V$  called edge set, that is, $E\subseteq {V \choose 2}$. For a set $U\subseteq V$, we define the induced subgraph of $G=(V,E)$ according to $U$ as the pair $H= (U, E(U))$, where $E(U)=E\cap {U \choose 2}$. Whenever such a graph exists we say that $H$ is an induced subgraph of $G$ and denote it by $H\subseteq G$. If, instead, we have a graph with vertex set $U$ such that its edges are only contained in $E(U)$ we simply call it a subgraph of $G$. A spanning subgraph of $G$ is a subgraph $H$ with $V(H)=V(G)$.

A path in a graph $G$ is an ordered collection of nodes $v_1, \dots v_k$ such that for all $i\in {1, \dots k-1}$ the pair $v_i$ and $v_{i+1}$ are adjacent. Similarly, a cycle can be defined as a path where $v_k $ and $v_1$ are also adjacent. We say that a graph $G$ is \emph{connected} if for any pair of vertices $u,v\in G$ there exists a path $\{w_1,\dots w_k\}$ where $w_1 = u$ and $w_k=v$ for some integer $k$.  Given two nodes $u,v$ in a connected graph $G$ the distance between them, denoted by $d(u,v)$ is defined as the length of the shortest path between $u$ and $v$.

A tree $T$ is an undirected graph such that it is connected and does not have any cycles.  A $P_4$ is an induced path of length four. 

Given two graphs $G_1=(V_1, E_1)$ and $G_2=(V_2,E_2)$, we define the \emph{union} between both graphs, denoted by $G_1\cup G_2$ as the graph $\tilde{G}=(\hat{V}, \tilde{E})$, with $\tilde{V} = V_1\cup  V_2$ and $\tilde{E} = E_1\cup E_2 $. Given two graphs $G_1=(V_1, E_1)$ and $G_2=(V_2,E_2)$, we define the \emph{join} between both graphs, denoted by $G_1*G_2$ as the graph $\hat{G}=(\hat{V}, \hat{E})$, with $\hat{V} = V_1\cup  V_2$ and $\hat{E} = E_1\cup E_2 \cup \{v_1v_2 \: \text{ such that } v_1\in V_1, v_2 \in V_2\}$.

 The set of neighbors of a node $u$ is denoted $N(u)$, and the closed neighborhood $N[u]$ is the set $N(u)\cup \{u\}$. A node $v$ is said to be a \emph{pending node} if it has a unique neighbor in the graph. A pair of nodes $u,v\in V$ are said to be \emph{twins} if their neighborhoods are equal.  That is, $N(u) = N(v)$ or $N[u]= N[v]$.  In the case that $u$ and $v$ are adjacent ($N[u]= N[v]$) we refer to them as \emph{true} twins and, otherwise, we refer to them as \emph{false} twins.

As we mentioned in the introduction, a cograph is a graph that does not contain a  $P_4$ as an induced subgraph (i.e. it is  $P_4$-\emph{free}). Another equivalent definition states that cographs are the graphs which can be obtained recursively following three rules: \lipics{(1)} A single vertex is a cograph, \lipics{(2)} the disjoint union between two cographs is a cograph and \lipics{(3)} the join of two cographs is a cograph.
An advantage of cographs is that they admit other characterizations that may be useful for local verification. First, we define a \emph{twin ordering} as an ordering of the nodes $V=\{v_i\}_{i=1}^n$ such that, for each $j\ge2$,  $v_j$ has a twin in $G(v_1, \dots v_j)$.

\begin{proposition}[\cite{kari2015solving}] \label{cographCharac}Given a graph $G$ the following are equivalent:
	
	\begin{enumerate}
		\item $G$ is a cograph.
		\item Each non trivial induced subgraph of  $G$ has a pair of \emph{twins}.
		\item $G$ is $P_4$-\emph{free}.
		\item $G$ admits a \emph{twin ordering}.
	\end{enumerate}
\end{proposition}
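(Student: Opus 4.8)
The plan is to prove the proposition by establishing the cyclic chain of implications $(1)\Rightarrow(3)\Rightarrow(2)\Rightarrow(4)\Rightarrow(1)$, which makes all four statements equivalent. For $(1)\Rightarrow(3)$ I would induct on the recursive construction of the cograph $G$. A single vertex contains no $P_4$. If $G=G_1\cup G_2$, any induced $P_4$ is connected and hence lies entirely inside $G_1$ or inside $G_2$, so the inductive hypothesis applies. If $G=G_1*G_2$, suppose for contradiction that four vertices induce a path $a-b-c-d$. Since every vertex of $V_1$ is adjacent to every vertex of $V_2$, each of the three non-edges $\{a,c\}$, $\{a,d\}$, $\{b,d\}$ must lie within a single part; chasing these constraints (from $\{a,d\}$ and $\{a,c\}$ one gets $a,c,d$ in the same part, and $\{b,d\}$ then pulls in $b$) forces all four vertices into the same $V_i$, contradicting the inductive hypothesis on $G_1$ or $G_2$.

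For $(3)\Rightarrow(2)$, since induced subgraphs of $P_4$-free graphs are again $P_4$-free, it suffices to show that every $P_4$-free graph on at least two vertices has a pair of twins. The key structural fact, and the step I expect to be the main obstacle, is the classical lemma that if $G$ is $P_4$-free with $|V(G)|\geq 2$, then $G$ or $\overline{G}$ is disconnected. One route is a diameter argument: if $G$ is connected then $\mathrm{diam}(G)\leq 2$, because a shortest path of length $3$ is already an induced $P_4$; the residual diameter-$2$ case (with both $G$ and $\overline{G}$ connected) must then be ruled out, or the result simply cited. Granting the lemma, I would use that \emph{being a twin pair} is invariant under complementation and hereditary under taking induced subgraphs, so I may assume $G$ is disconnected, and then induct on $|V(G)|$: if some component $C$ has at least two vertices, a twin pair of $G[C]$ (obtained by induction) stays a twin pair in $G$ because those vertices have no neighbours outside $C$; and if every component is a singleton, $G$ is edgeless and any two vertices are false twins.

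For the remaining arc $(2)\Rightarrow(4)\Rightarrow(1)$, I would first build a twin ordering in reverse: repeatedly locate a twin pair in the current induced subgraph (guaranteed by $(2)$, which applies to every non-trivial induced subgraph), delete one of the two vertices, place it next in the reverse order, and note its partner survives to witness the twin condition; iterating down to a single vertex yields the ordering $v_1,\dots,v_n$. For $(4)\Rightarrow(1)$ the crux is the claim that adding to a cograph a vertex which is a twin of an existing vertex $u$ again yields a cograph, proved by induction on a decomposition $H=H_1\oplus H_2$ with $\oplus\in\{\cup,*\}$: placing $u$ in, say, $H_1$, the twin condition forces the new vertex to relate to $H_2$ exactly as $u$ does (to none of it if $\oplus=\cup$, to all of it if $\oplus=*$), so it can be inserted as a twin of $u$ inside $H_1$ by the inductive hypothesis while preserving the outer operation $\oplus$. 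Walking along the twin ordering and applying this claim at each step shows that every prefix $G[\{v_1,\dots,v_j\}]$ is a cograph, and taking $j=n$ gives that $G$ itself is a cograph, closing the cycle.
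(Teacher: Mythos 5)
The paper does not prove this proposition at all: it is imported verbatim from the cited reference \cite{kari2015solving}, so there is no in-paper argument to compare against. Your proposal supplies a genuinely self-contained proof of the standard equivalences, and the cycle $(1)\Rightarrow(3)\Rightarrow(2)\Rightarrow(4)\Rightarrow(1)$ is structured correctly: the join/union case analysis for $(1)\Rightarrow(3)$ is right (the three non-edges of an induced $P_4$ do force all four vertices into one side of a join), the reverse-peeling construction for $(2)\Rightarrow(4)$ correctly leaves the surviving partner as the required twin, and the insertion lemma for $(4)\Rightarrow(1)$ (a twin of $u$ must relate to the other side of the decomposition exactly as $u$ does, so it can be pushed down into $u$'s part) is the standard and correct way to close the loop. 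The one soft spot is exactly the one you flag in $(3)\Rightarrow(2)$: the diameter observation by itself does not deliver Seinsche's theorem, since a connected graph of diameter $2$ can perfectly well have a connected complement, so ``the residual diameter-$2$ case must be ruled out'' is where the real work lives; that case needs either an honest proof (e.g.\ the usual minimal-counterexample argument producing an induced $P_4$ from a vertex with a neighbour and a non-neighbour in a component of the complement) or an explicit citation, which is legitimate for such a classical fact. Everything downstream of that lemma --- complementation-invariance of twin pairs, heredity into components, and the edgeless base case --- is correct. In short: the paper cites, you prove; your proof is sound modulo making the Seinsche step precise rather than gestured at.
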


\begin{figure}[!ht]
	\centering
	\includegraphics[width=0.4\linewidth]{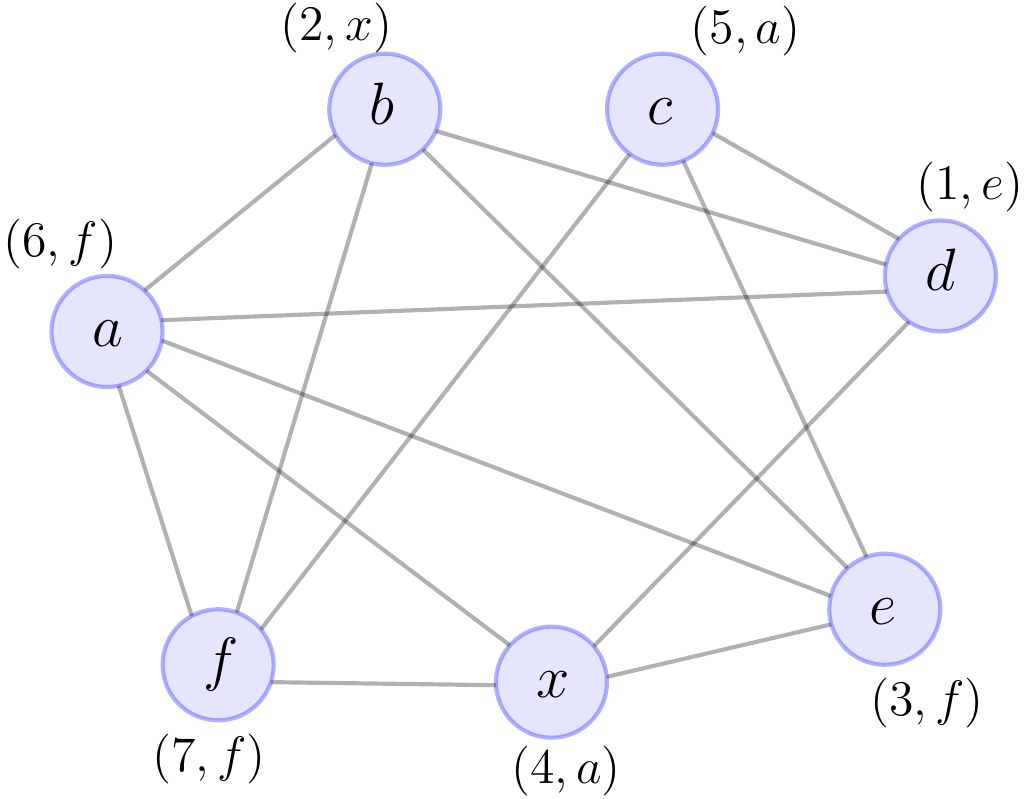}
	\caption{A cograph with labels according to a twin ordering. The first entry represents the step at which they are removed, while the second entry indicates the node's twin at such step.}
	\label{fig:cograph}
\end{figure}

\begin{figure}[!ht]
	\centering
	\includegraphics[width=0.6\linewidth]{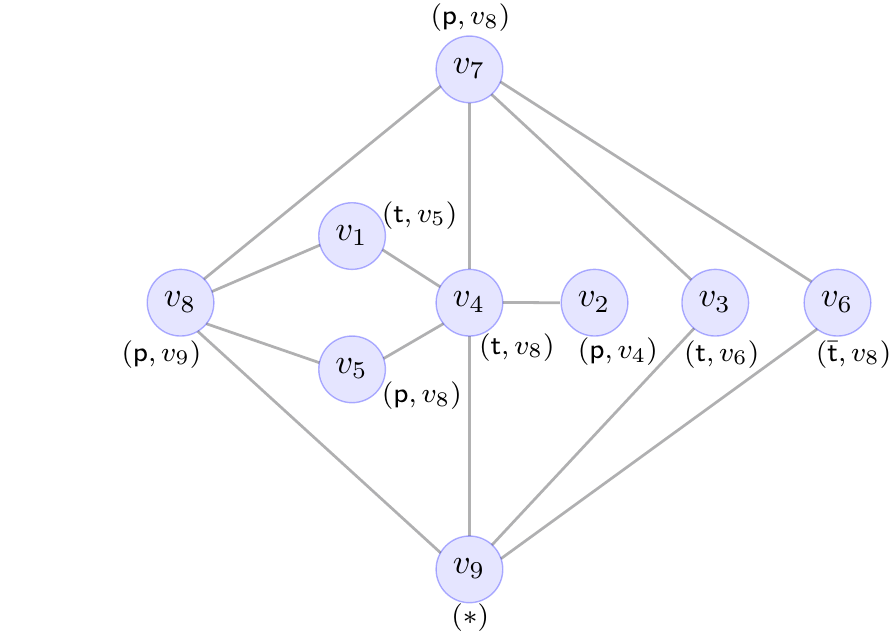}
	\caption{A distance-hereditary graph with labels according to its ordering. The first entry indicates whether it is removed as a \emph{true twin} ($\bar{\textsf{t}}$), a \emph{false twin} (\textsf{t}) or a pending node (\textsf{p}).}
	\label{fig:distheredex}
\end{figure}

A graph $G$  is said to be distance-hereditary if for any induced subgraph $H\subseteq G$ and any pair  $u,v\in H$ satisfy that $d_H(u,v)=d_G(u,v)$. That is, any induced path between a pair of nodes is a shortest path. A relevant characterization for this class is the following.
\begin{proposition}[\cite{brandstadt1999graph}]
	An $n$-node graph $G$ is said to be distance hereditary iff there exists an ordering  $\{v_i\}_{i=1}^n$ such that, for any $i\in [n]$, either there exists $j< i$ such that $v_i$ and $v_j$ are \emph{twins} in $G_i= G(v_1,.. v_i)$ or $v_i$ is a pending node at $G_i$.
\end{proposition}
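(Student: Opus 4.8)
The plan is to read the ordering backwards: the stated condition is equivalent to saying that $G$ can be reduced to a single vertex by repeatedly deleting either a pending node or one member of a twin pair (true or false). Adopting this reduction/extension viewpoint, I would prove the two implications separately, each by induction on $n = |V(G)|$, and in each case the inductive step collapses to a single local lemma.

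\textbf{Sufficiency ($\Leftarrow$).} Assume $G$ admits such an ordering $v_1, \dots, v_n$; I claim every $G_i = G(v_1, \dots, v_i)$ is distance-hereditary, by induction on $i$. The base case $G_1$ is a single vertex and is trivial. For the step, since $G_{i-1}$ is an induced subgraph of $G_i$ and being distance-hereditary is manifestly a hereditary property (an induced subgraph of an induced subgraph is induced), it suffices to establish the key Lemma A: attaching to a distance-hereditary graph a pending node, a true twin, or a false twin yields a distance-hereditary graph. I would prove Lemma A by a direct rerouting argument. Let $H$ be any connected induced subgraph of $G_i$ and $a,b \in V(H)$; because $H$ is a subgraph we always have $d_{G_i}(a,b) \le d_H(a,b)$, so only the reverse inequality needs proof, i.e. some shortest $G_i$-path must be realizable inside $H$. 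When $v_i$ is a twin of $v_j$, any shortest path through $v_i$ can be rerouted through $v_j$ (their neighborhoods agree), so one may assume the relevant shortest path avoids $v_i$ and lies in $G_{i-1}$, where the induction hypothesis applies; the pending case is even simpler, since the new leaf only creates distances dictated by its unique neighbor.

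\textbf{Necessity ($\Rightarrow$).} Assume $G$ is distance-hereditary with $n \ge 2$. By heredity, deleting any vertex leaves a distance-hereditary graph, so it is enough to produce one pruning vertex and then induct: I take $v_n$ to be that vertex and apply the inductive hypothesis to $G - v_n$ to order $v_1, \dots, v_{n-1}$, noting that each $v_i$ ($i<n$) is a twin or pending node in $(G-v_n)(v_1,\dots,v_i) = G_i$. Everything thus rests on the structural Lemma B: every distance-hereditary graph on at least two vertices contains a pending node or a pair of twins. To prove Lemma B I would fix an arbitrary root $r$, form the BFS levels $L_0 = \{r\}, L_1, \dots, L_k$, and examine the deepest level $L_k$. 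Each $x \in L_k$ has all its neighbors in $L_{k-1} \cup L_k$; writing $N^-(x)$ for its down-neighborhood in $L_{k-1}$, the distance-hereditary condition forces strong compatibility among the sets $N^-(x)$ over a common component of the bottom of the graph. Concretely, if two vertices $x,y \in L_k$ had incomparable down-neighborhoods, with $p \in N^-(x)\setminus N^-(y)$ and $q \in N^-(y)\setminus N^-(x)$, one could exhibit an induced path between $p$ and $q$ through $L_k$ that is strictly longer than their genuine distance via the upper levels, contradicting distance-heredity. Ruling this out leaves exactly the favorable cases: some $x \in L_k$ has $|N^-(x)| = 1$ and no neighbor inside $L_k$ (a pending node), or two vertices of $L_k$ share identical neighborhoods (a twin pair).

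\textbf{Main obstacle.} The delicate part is Lemma B, and within it the precise bookkeeping of the level structure: showing that the distance condition really does force the down-neighborhoods of same-level vertices to be pairwise nested, and then converting the minimal obstruction into a genuine \emph{induced} path (not merely a walk) whose length beats the true distance. I would take care to distinguish true twins ($N[u]=N[w]$) from false twins ($N(u)=N(w)$) according to whether the witness vertices in $L_k$ are adjacent, and to treat $k=1$ (where the bottom level hangs directly off $r$) separately, since there a single down-neighbor immediately exhibits a pending node. The sufficiency direction, by contrast, is essentially routine once the twin-rerouting observation is stated cleanly.
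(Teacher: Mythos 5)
The paper does not prove this proposition at all: it is quoted from the literature (the pruning-sequence characterization of Bandelt--Mulder, as presented in \cite{brandstadt1999graph}), so there is no in-paper argument to compare against. Your overall architecture --- induction on $n$ in both directions, reducing sufficiency to ``adding a pendant vertex or a twin preserves distance-heredity'' and necessity to ``every distance-hereditary graph on at least two vertices contains a pendant vertex or a twin pair'' --- is exactly the standard route, and the sufficiency half is essentially correct. One point there deserves care: when you reroute a path through the twin $v_j$, the subgraph $H$ under consideration may contain $v_i$ but not $v_j$, so the rerouted path need not live in $H$; the clean fix is to observe that replacing $v_i$ by $v_j$ in $H$ yields an isomorphic induced subgraph of $G_{i-1}$ and to transfer distances through that isomorphism. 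This is a patchable omission, not a flaw in the strategy.

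The genuine gap is in the conclusion of your Lemma B. After ruling out incomparable down-neighborhoods in the deepest BFS level $L_k$, you assert that what remains is either a vertex of $L_k$ with a single down-neighbor and no neighbor in $L_k$ (pendant) or two vertices of $L_k$ with identical neighborhoods (twins). That case analysis is incomplete: the witness need not lie in $L_k$. Take $G=C_4$ on $a,b,c,d$ rooted at $a$; then $L_2=\{c\}$ is a single vertex with two down-neighbors and no neighbor in its level, so it is neither pendant nor half of a twin pair inside $L_2$, yet $G$ is distance-hereditary --- the twins are $b$ and $d$ in $L_1$. So the surviving configurations include an isolated vertex of $G[L_k]$ with $|N^-(x)|\ge 2$, and you must then argue (as Bandelt--Mulder do) that its down-neighborhood, or the structure one level up, produces the required twins; similarly, two adjacent vertices of a component of $G[L_k]$ with equal down-neighborhoods are not automatically twins, since their neighborhoods \emph{within} $L_k$ must also be shown to agree. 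Until these residual cases are handled, the necessity direction --- which you correctly identify as the delicate half --- is not established.
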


\subsection{Model Definitions}

Let $G$ be a simple connected $n$-node graph, let $I:V(G)\to \{0,1\}^*$ be an input function assigning labels to the nodes of $G$, where the size of all inputs is polynomially bounded on $n$. Let $\id:V(G)\to\{1,\dots,\text{poly}(n)\}$ be a one-to-one function assigning identifiers to the nodes. A \emph{distributed language} $\mathcal L$ is a (Turing-decidable) collection of triples $(G,\id,I)$, called \emph{network configurations}.

A distributed interactive protocol consists of a constant series of interactions between a {\emph{prover}} called Merlin, and a {\emph{verifier}} called Arthur. The prover Merlin is centralized, has unlimited computing power and knows the complete configuration $(G,\id,I)$. However, he can not be trusted. On the other hand, the verifier Arthur is distributed, represented by the nodes in $G$, and has limited knowledge. In fact, at each node $v$, Arthur is initially aware only of his identity $\id(v)$, and his label $I(v)$. He does not know the exact value of $n$, but he knows that there exists a constant $c$ such that $\id(v) \leq n^c$. Therefore, for instance, if one node $v$ wants to communicate his $\id(v)$ to its neighbors, then the message is of size $\cO(\log n)$.

Given any network configuration $(G, \id, I)$, the nodes of $G$ must collectively decide whether $(G, \id, I)$ belongs to some distributed language ${\mathcal L}$. If this is indeed the case, then all nodes must accept; otherwise, at least one node must reject (with certain probabilities, depending on the precise specifications we are considering).

There are two types of interactive protocols: Arthur-Merlin and Merlin-Arthur. Both types of protocols have two phases: an interactive phase and a verification phase. Let us define first {\emph{Arthur-Merlin interactive protocols}}. If Arthur is the party that starts the interactive phase, he picks a random string $r_1(v)$ at each node $v$ of $ G $ (this string could be either private or shared) and send them to Merlin. Merlin receives $r_1$, the collection of these $n$ strings, and provides every node $v$ with a certificate $c_1(v)$ that is a function of $v$, $r_1$ and $(G,\id,I)$. Then again Arthur picks a random string $r_2(v)$ at each node $v$ of $G$ and sends $r_2$ to Merlin, who, in his turn, provides every node $v$ with a certificate $c_2(v)$ that is a function of $v$, $r_1$, $r_2$ and $(G,\id,I)$. This process continues for a fixed number of rounds. If Merlin is the party that starts the interactive phase, then he provides at the beginning every node $v$ with a certificate $c_0(v)$ that is a function of $v$ and $(G,\id,I)$, and the interactive process continues as explained before. In Arthur-Merlin protocols, the process ends with Merlin. More precisely, in the last, $k$-th round, Merlin provides every node $v$ with a certificate $c_{\lceil k/2\rceil}(v)$. Then, the verification phase begins. This phase is a one-round deterministic algorithm executed at each node. More precisely, every node $v$ broadcasts a message $M_v$ to its neighbors. This message may depend on $\id(v)$, $I(v)$, all random strings generated by Arthur at $v$, and all certificates received by $v$ from Merlin. Finally, based on all the knowledge accumulated by $v$ (i.e., its identity, its input label, the generated random strings, the certificates received from Merlin, and all the messages received from its neighbors), the protocol either accepts or rejects at node $v$. Note that Merlin knows the messages each node broadcasts to its neighbors because there is no randomness in this last verification round.

A {\emph{Merlin-Arthur interactive protocols}} of $k$ interactions is an Arthur-Merlin protocol with $k-1$ interactions, but where the verification round is randomized. More precisely, Arthur is in charge of the $k$-th interaction, which includes the verification algorithm. The protocol ends when Arthur picks a 
random string $r(v)$ at every node $v$ and uses it to perform a (randomized) verification algorithm. In other words, each node $v$ randomly chooses a message $M_v$ from a distribution specified by the protocol, and broadcast $M_v$ to its neighbors. Finally, as explained before, the protocol either accepts or rejects at node $v$. Note that, in this case, Merlin does not know the messages each node broadcasts to its neighbors (because they are randomly generated). If $k=1$, a distributed Merlin-Arthur protocol is a (1-round) randomized decision algorithm; if $k=2$, it can be viewed as the non-deterministic version of randomized decision, etc.

\begin{definition}
Let ${\mathcal V}$ be a verifier and ${\mathcal M}$ a prover of a distributed interactive proof protocol for languages over graphs of $n$ nodes.
If $({\mathcal V}, {\mathcal M})$ corresponds to an Arthur-Merlin (resp. Merlin Arthur) $k$-round, $\cO(f(n))$ bandwidth protocol,
we write $({\mathcal V}, {\mathcal M}) \in {\dAM}_{\prot}[k,f(n)]$ (resp. $({\mathcal V}, {\mathcal M}) \in {\dMA}_{\prot}[k,f(n)]$).
\end{definition}

\begin{definition}
Let $\eps \leq 1/3$. The class $\dAM_{\eps}[k,f(n)]$ (resp. $\dMA_{\eps}[k,f(n)]$) is the class of languages ${\mathcal L}$ over graphs of $n$ nodes
for which there exists a verifier ${\mathcal V}$ such that, for every configuration $(G,\id,I)$ of size $n$, the
two following conditions are satisfied.

\begin{itemize}
\item \emph{\completeness}
If $(G,\id,I) \in \cL$ then, there exists a prover $ {\mathcal M}$ such that 

$({\mathcal V}, {\mathcal M}) \in {\dAM}_{\prot}[k,f(n)]$
(resp. $({\mathcal V}, {\mathcal M}) \in {\dMA}_{\prot}[k,f(n)]$) and 
\noindent
\[\mathbf{Pr} \Big{[} \mathcal{V} \mbox{ accepts } (G,\id,I) \mbox{ in every node given } \mathcal{M}\Big{]} \geq 1 - \eps.\]

\item \emph{ \soundness}
 If $(G,\id,I) \notin \cL$ then, for every prover $ {\mathcal M}$ such that 
 
 $({\mathcal V}, {\mathcal M}) \in {\dAM}_{\prot}[k,f(n)]$
(resp. $({\mathcal V}, {\mathcal M}) \in {\dMA}_{\prot}[k,f(n)]$),
\noindent
\[\mathbf{Pr} \Big{[} \mathcal{V} \mbox{ rejects } (G,\id,I) \mbox{ in at least one nodes given } \mathcal{M}\Big{]} \geq 1-\eps.\]
\end{itemize}
We also denote $\dAM[k,f(n)]= \dAM_{1/3}[k,f(n)]$ and $\dMA= \dMA_{1/3}[k,f(n)]$.
\end{definition}

We omit the subindex $\eps$ when its value is obvious from the context. 
For small values of $k$, instead of writing $\dAM[k,f(n)]$, we alternate \textsf{M}'s and \textsf{A}'s.
For instance: $ \dMAM[f(n)] = \dAM[3,f(n)] $. In particular $ \dAM[f(n)]= \dAM[2,f(n)] $, $ \dMA[f(n)] = \dMA[2,f(n)]$.

\begin{definition}
The shared randomness setting may be seen as if all the nodes, in any given round, sent the same random string to Merlin.
In order to distinguish between the settings of private randomness and shared randomness, we denote them by $\dAM^{\priv}[k,f(n)]$ and $\dAM^{\pub}[k,f(n)]$, respectively.
\end{definition}

In this paper, we are interested mainly in two languages, that we call \cograph\ and \distHereditary\, which are the languages of graphs that are cographs and distance-hereditary graphs, respectively.
Formally,

\begin{itemize}
	\item $\cograph= \{	\langle G, \id \rangle \text{ s.t. } G \text{ is a cograph.}\}$ 
	\item $\distHereditary=\{\langle G, \id \rangle \text{ s.t. } G \text{ is distance-hereditary.}\}$	
\end{itemize}

Also, for a distributed language $\mathcal{L}$, the \emph{restriction of $\mathcal{L}$ to cographs}, denoted  $\mathcal{L}_\cograph$ is the subset of network configurations $(G, \id, I) \in \mathcal{L}$ such that $G$ is a cograph.

\section{Cographs }\label{sec:cograph}
In order to describe a protocol, we first show a way to distribute the proofs received by the network in such a way that we can centralize the verification process, by considering properties of cographs related to their connectivity.

\begin{lemma}\label{BCC}
	Given a connected $n$-node  cograph $G$, it is possible to construct a spanning tree $T$ of \emph{depth} two, such that each node at depth one has at most one child.
\end{lemma}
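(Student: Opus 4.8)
The plan is to reduce the construction of such a tree to the existence of a suitable matching, and then to exploit the join structure of connected cographs.

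First I would analyze what a depth-two tree with the stated property looks like. If $r$ is the root, then the vertices placed at depth one must all lie in $N(r)$, and every vertex at depth two must be attached to a distinct depth-one vertex: the ``at most one child'' condition is exactly the requirement that distinct depth-two vertices have distinct parents. Hence, taking all of $N(r)$ as the depth-one layer, building the tree becomes equivalent to finding, for a well-chosen root $r$, a matching of $G$ that saturates $B := V \setminus N[r]$ using only edges between $B$ and $A := N(r)$: each $b \in B$ is placed at depth two as the unique child of its matched partner in $A$, while the unmatched vertices of $A$ stay as leaves at depth one. Such a structure has exactly $|A| + |B| = n - 1$ edges, is connected since every vertex is reachable from $r$, and therefore is a spanning tree of depth two in which each depth-one node has at most one child. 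So the whole statement reduces to producing a root $r$ together with a $B$-saturating matching into $N(r)$.

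Next I would invoke the structure of connected cographs. For $n = 1$ the claim is trivial, so assume $n \ge 2$. Since $G$ is connected with at least two vertices, the recursive characterization of cographs forces it to be a join $G = G_1 * G_2$ of two nonempty cographs, because a disjoint union $G_1 \cup G_2$ is necessarily disconnected. I would then choose the root $r$ in the \emph{smaller} side, say $|V(G_1)| \le |V(G_2)|$ with $r \in V(G_1)$. Because $r$ is joined to every vertex of $G_2$, all of its non-neighbors lie inside $G_1$; thus $B = V(G_1) \setminus N[r]$ and $|B| \le |V(G_1)| - 1 \le |V(G_2)| = |V \setminus V(G_1)|$. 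Moreover, again by the join, every $b \in B$ is adjacent to all of $V(G_2) \subseteq A$, so the bipartite graph between $B$ and $V(G_2)$ induced by $G$ is complete. A complete bipartite graph with $|V(G_2)| \ge |B|$ clearly admits a matching saturating $B$, which is precisely what the reduction requires.

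The conceptual heart, and the step I expect to require the most care, is the reduction in the second paragraph: recognizing that the ``at most one child'' constraint is exactly a matching condition, and that it suffices to match the non-neighbors of $r$ into $N(r)$. Once this is in place, the cograph structure does the rest essentially for free: choosing the root in the smaller join-part keeps $B$ small while the opposite part $V(G_2)$ is large and completely adjacent to every vertex of $B$, so the needed matching exists by a one-line counting argument (equivalently, Hall's condition holds trivially, since every nonempty subset of $B$ has all of $V(G_2)$ as common neighbors). Assembling the tree from $r$, the layer $N(r)$, and this matching then completes the proof.
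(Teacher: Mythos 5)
Your proof is correct and follows essentially the same route as the paper's: decompose the connected cograph as a join $G_1 * G_2$, place the root in the smaller part so that the entire larger part sits at depth one, and match the remaining vertices of the smaller part into the larger part using the complete bipartite structure guaranteed by the join (the paper cites Hall's theorem where you observe the condition holds trivially). The only differences are cosmetic: your explicit reduction to a $B$-saturating matching and your choice to put all of $N(r)$ at depth one rather than just the opposite join-part.
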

\begin{proof}
	As $G$ is connected, by definition it follows that $G$ can be obtained from the \emph{join} of two smaller cographs $G_1  $ and $G_2$. Then, let $G_1$ be the one with at least $\frac{n}{2}$ nodes.
	
	Consider now $\rho\in G_2$, the root of the $T$ to be constructed. It follows that $\rho$ has all of  $G_1$ as neighbors. Then, for each node in $G_1$, we set $\rho$ to be its parent in  $T$. 
	
	Finally, we have that the edges between $V(G_1)$ and $V(G_2-\rho)$ induce a complete bipartite graph $\hat{G}$, and the number of nodes in $G_2 - \rho$ is at most $n/2$. Therefore,  by Hall's theorem it follows that there exists a matching $M$ between both sides of $\hat{G}$ such that all nodes in $G_2-\rho$ have a match. Thus, for $u \in G_2-\rho$, we set its parent in $T$ to be its match $m(u) $ in $G_1$. The lemma follows.
\end{proof}

By the previous lemma, we know that for any two round protocol $\mathcal{P}$ over a cograph $G$ with cost $\Omega(\log n)$ bits, we may assume without loss of generality that there is a root $\rho$ with access to all coins and messages received by the whole network: Simply construct the a spanning tree given by Lemma \ref{BCC}, by choosing the root $\rho$ in a standardized manner: a bipartite graph can be easily verified with two colors, and $\rho$ can be chosen to be the node in $G_2$ with the smallest identifier. Then, it suffices to assign to each node $u$ of depth one in the spanning tree, both its proof and the proof received by its child $w$, along with the random coin it drew. Then, the nodes can locally verify the consistency of this message and the root will have received the entirety of messages in the network.
\begin{lemma}\label{lem:routing}
Given any $\dM$ (resp. $\dAM$) protocol with bandwidth $L$, we can construct a $\dM$ (resp. $\dAM$) protocol with bandwidth cost $L+ \cO(\log n)$ and where there exists a node $\rho$ which has access to all messages (resp. all messages and coins) in the network.
\end{lemma}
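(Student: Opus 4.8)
The plan is to turn the informal argument following \ref{BCC} into an explicit transformation, separating the \emph{certification of a suitable spanning tree} from the \emph{routing of the transcript} to the distinguished node $\rho$. Throughout, $G$ is the connected cograph on which the given protocol runs, so \ref{BCC} applies.

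First I would have the prover exhibit a depth-two spanning tree $T$ in which every depth-one node has at most one child; such a tree exists by \ref{BCC}. I would certify it with the standard proof-labeling scheme for spanning trees, at a cost of only $\cO(\log n)$ extra bits: each node receives $\id(\rho)$, a parent pointer, and its depth in $\{0,1,2\}$, and checks locally that its parent is a neighbour, that the depth increases by one along each tree edge, and that all its neighbours report the same root $\id(\rho)$. The fan-out constraint is checked by every depth-one node, which hears the parent-claims of its neighbours and rejects if more than one of them names it as parent. Observe that this certification already forces the geometry we need: a depth-one node is a neighbour of $\rho$, and a depth-two node is a neighbour of its depth-one parent, so every node lies within two tree hops of $\rho$.

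Second, I would route the whole verification transcript to $\rho$ inside the single verification round by a one-hop relay. A depth-one node $u$ broadcasts its own message (and, in the \dAM\ case, its coin); since $u \in N(\rho)$, this reaches $\rho$ directly. For the unique child $w$ of $u$ (if it exists), the prover additionally hands $u$ a copy $\widetilde{M}_w$ of the message $w$ is going to broadcast, and $u$ relays $\widetilde{M}_w$ in the same round, again reaching $\rho$. Hence each relaying node sends, on top of its original broadcast, at most one child's $L$-bit proof (and coin) together with $\cO(\log n)$ tree-certification bits, which is the $L+\cO(\log n)$ overhead of the statement; the fan-out bound of \ref{BCC} is exactly what keeps this overhead additive. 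The prover can always produce $\widetilde{M}_w$: in \dM\ there is no randomness, while in \dAM\ the verification round is a deterministic function of the coins, and Merlin speaks last after having received every coin, so he knows precisely what each node will broadcast.

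The hard part will be the soundness of the relay, i.e. guaranteeing that $\rho$ receives the genuine transcript rather than one fabricated by a dishonest prover. This is where I would exploit the simultaneity of the broadcast round: because $w \in N(u)$, in that same round $u$ also hears $w$'s \emph{actual} broadcast $M_w$, and can compare it against the relayed copy $\widetilde{M}_w$ supplied by the prover, rejecting on any discrepancy. Thus no prover can slip a corrupted message of a depth-two node past its relaying parent, while the honest prover passes every such equality test. Combining this audit with the direct reception from depth-one neighbours and $\rho$'s own data, $\rho$ ends up holding a faithful copy of every message (and, in \dAM, every coin) of the network, as required. The two points I expect to need the most care are precisely this one-round audit — the relay must succeed even though $\rho$ is \emph{not} adjacent to the depth-two nodes — and the interaction order in \dAM, where soundness of the forwarded copies hinges on Merlin moving last so that the deterministic verification messages are already determined when he sends them.
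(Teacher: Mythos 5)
Your proposal is correct and follows essentially the same route as the paper, whose own justification is the short paragraph preceding the lemma: build the depth-two, fan-out-one spanning tree of Lemma~\ref{BCC}, have the prover hand each depth-one node a copy of its unique child's message (and coin), let that node audit the copy against the child's actual broadcast in the verification round, and let $\rho$ collect everything at an extra cost of $L+\cO(\log n)$ bits. Your write-up merely makes explicit the tree certification and the one-round consistency check that the paper leaves implicit.
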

An advantage of this procedure is that we may simulate any protocol in the (non-deterministic) \textsf{One-Round Broadcast Congested Clique} model (by using the root $\rho$ as referee) by either using one round of interaction (if the simulated protocol deterministic) or two rounds (when the simulated protocol is randomized). From here it follows that we can use the public coin protocol by \cite{kari2015solving} to recognize cographs, therefore constructing a protocol for cograph detection in two rounds of interaction and $\cO(\log n)$ bits. That is, $\cograph\ \in \dAM^{\pub}[\log n]$.

For the sake of completeness, we now describe the protocol of \cite{kari2015solving}.

\begin{definition}\label{canonical Fam}
	Given a cograph $G=(V,E)$, we can define its \emph{canonical order} as follows:  
	
	We start by choosing the smallest pair of \emph{twins} (those with the smallest identifiers in lexicographic order) which we know to exist by Proposition \ref{cographCharac}. From there we choose and remove the smallest node from this pairing. Then, we repeat this process by finding another pair and removing one of its members until we end up with a single node.
\end{definition}

Let $p$ be a prime and $\phi = (\phi_w)_{w\in V}$ be a family of linearly independent polynomials in $\mathbb{Z}_p[x]$. Given $w\in V$ we define, $q_w = \sum_{w'\in N(w)} \phi_{w'}$ and $\bar{q}_w = q_w + \phi_w$. 

We also define the \emph{derivated polynomials} of $\phi$ as the collection 
\[\alpha_{u,v} = \phi_u-\phi_v\qquad \beta_{u,v} = q_u-q_v, \qquad \gamma_{u,v} = \bar{q}_u-\bar{q}_v, \quad u.v \in V\]
Now, given a pair of \emph{twins} $u$ and $v$, we assign to $G-v$ the pair of polynomials $\{\phi'_{w}\}_{w \in V-v}$ defined as 
\[\phi'_w= \begin{cases}
\phi_w  &\text{if } w \neq u\\
\phi'_u + \phi_v &\text{if } w=u
\end{cases}\]
	With this construction, from $\phi_u(x) = x^{\id(u)}$ it is possible to construct a sequence of polynomials  $\phi^i_u$ for $i \in [n]$ according to the \emph{canonical order} $\{v_i\}_{i=1}^n$ and  $u$ in the graph $G-\{v_j\}_{j=i+1}^n$. We call these functions the \emph{basic} polynomials of  $G$. And so the  \emph{canonical family} of polynomials of $G$ is defined as the union between its basic and derived polynomials. It follows that this family of functions has at most $3n^3$ elements.

\begin{definition}We say that a vector $m= ((a_w,b_w))_{w\in V} \in (\mathbb{{Z}}_p)^{2n}$ is \emph{valid for $G$ in $t\in \mathbb{Z}_p$} if there exists a family of linearly independent polynomials $(\phi_w)_{w\in V}$ in $\mathbb{Z}_p[X]$ such that $a_w = \phi_w(t)$ and $b_w=q_w(t)$ for each $w\in V$.
\end{definition}
\begin{lemma}\label{lem:canonTwin}
	Let  $m= ((a_w,b_w))_{w\in V} \in (\mathbb{{Z}}_p)^{2n}$ be a valid vector for $G$ in $t$. Consider $u,v$ to be a pair of twins in $G$ such that $a_u \neq a_v$. Then, the vector $m'=  ((a'_w,b'_w))_{w\in V-v} \in (\mathbb{{Z}}_p)^{2n-2}$ is valid for $G-v$ in $t$, where its coordinates are given by
	\[(a'_w, \: b'_w)=\begin{cases}
	(a_w, \:b_w) & \text{ if } w \in V- \{u,v\}\\
	(a_u + a_v, \: b_u-a_v\delta_{uv}) &\text{ if } w = u
	\end{cases}\]
	with $\delta_{uv}$ equals one if and only if $ a_u + b_u = a_v + b_v$
\end{lemma}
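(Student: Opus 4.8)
The plan is to produce an explicit witnessing family of polynomials for $G-v$ and verify the three requirements in the definition of a valid vector. The natural candidate is the family $(\phi'_w)_{w\in V-v}$ constructed immediately before the statement, namely $\phi'_w=\phi_w$ for $w\neq u$ and $\phi'_u=\phi_u+\phi_v$, where $(\phi_w)_{w\in V}$ is a linearly independent family witnessing the validity of $m$ for $G$ at $t$ (so $a_w=\phi_w(t)$ and $b_w=q_w(t)$). I would then check, in order: (i) that $(\phi'_w)_{w\in V-v}$ is linearly independent; (ii) that $a'_w=\phi'_w(t)$ for every $w\in V-v$; and (iii) that $b'_w=q'_w(t)$, where $q'_w:=\sum_{w'\in N_{G-v}(w)}\phi'_{w'}$ is the neighbour-sum polynomial in $G-v$.

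Parts (i) and (ii) are immediate. For (i), any relation $\sum_{w\neq u,v}c_w\phi_w+c(\phi_u+\phi_v)=0$ forces, by independence of the original family, $c_w=0$ and (reading off the coefficients of both $\phi_u$ and $\phi_v$) $c=0$. For (ii), $a'_w=a_w=\phi_w(t)=\phi'_w(t)$ when $w\neq u$, while $a'_u=a_u+a_v=\phi_u(t)+\phi_v(t)=\phi'_u(t)$.

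The substance is (iii). Writing $N_{G-v}(w)=N(w)\setminus\{v\}$ and expanding, one gets, for $w\neq u$, the identity $q'_w=q_w+([u\in N(w)]-[v\in N(w)])\phi_v$, where $[\cdot]$ is the $0/1$ indicator and the $\phi_v$ term records both the loss of $v$ as a neighbour and the replacement of $\phi_u$ by $\phi_u+\phi_v$. The twin hypothesis gives $w\in N(u)\iff w\in N(v)$ for every $w\notin\{u,v\}$ (this holds for both true and false twins), so the bracketed coefficient vanishes and $q'_w=q_w$, whence $q'_w(t)=b_w=b'_w$. For $w=u$ the sum ranges over $N(u)\setminus\{v\}$, on which $\phi'=\phi$, so $q'_u=q_u-[v\in N(u)]\phi_v$ and therefore $q'_u(t)=b_u-[v\in N(u)]\,a_v$.

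The main obstacle, and the reason the correction term $\delta_{uv}$ is designed the way it is, is to certify the adjacency indicator $[v\in N(u)]$ from the available scalar data. Here I would use $a_w+b_w=(\phi_w+q_w)(t)=\bar q_w(t)=\sum_{w'\in N[w]}\phi_{w'}(t)$. If $u,v$ are \emph{true} twins then $v\in N(u)$, and $N[u]=N[v]$ forces $\bar q_u=\bar q_v$ identically, so $a_u+b_u=a_v+b_v$ and $\delta_{uv}=1$. If they are \emph{false} twins then $v\notin N(u)$, and $N(u)=N(v)$ gives $\bar q_u-\bar q_v=\phi_u-\phi_v$, whose value at $t$ is $a_u-a_v\neq 0$ by the hypothesis $a_u\neq a_v$; hence $\delta_{uv}=0$. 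In both cases $[v\in N(u)]=\delta_{uv}$, so $q'_u(t)=b_u-a_v\delta_{uv}=b'_u$, which completes (iii) and the proof. The only place the assumption $a_u\neq a_v$ is genuinely needed is precisely this last step, to guarantee that the false-twin case is not misread as a true-twin case.
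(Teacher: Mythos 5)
Your proof is correct and takes essentially the same route as the paper's: the same witnessing family ($\phi'_u=\phi_u+\phi_v$ and $\phi'_w=\phi_w$ otherwise) and the same three checks of linear independence, the $a$-coordinates, and the $b$-coordinates. The only difference is that you spell out why $\delta_{uv}$ equals the adjacency indicator $[v\in N(u)]$ (via $\bar q_u-\bar q_v$ and the hypothesis $a_u\neq a_v$), a step the paper asserts without detail.
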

\begin{proof}
	Let $(\phi_w)_{w\in V}$ be a family of linearly independent polynomials associated to $m$. Given that $u$ are $v$ are twins and $a_u \neq a_v$ it follows that $a_u + b_u = a_v + b_v$ if and only if $u $ are $v$ adjacent. Therefore, $\delta_{u,v} = 1$ iff $u$ are $v$ are adjacent.
	
	Let now $\left(\phi'_w\right)_{w\in V-v}$ where $\phi'_w =\phi_w$ for all $w\neq u$, and $\phi_u' = \phi_u + \phi_v$. It is clear that this family is linearly independent. For $w\neq u $ we have that $a'_w = a_w = \phi_w(t) = \phi'_w(t)$. Also, $b'_w = b_w $ and $b_w = q_w(t)$. Now, as $u$  $v$ are {twins} either both nodes are in $N(w)$ or neither $u$ nor $v$ are. In both cases it follows that $b'_w = q'_w(t)$.
	
	By definition we have that $a'_u = a_u + a_v  = \phi_u(t) + \phi_v(t) = \phi'_u(t)$ and $b'_u = b_u - \delta_{uv} a_v$. As $b_u = q_w(t) = \delta_{uv} \phi_v(t)+ q'_w(t)$ , we finally have that $b'_u = q'_w(t)$.
\end{proof}

With this lemma now me can proceed to describe the protocol.

\begin{theorem}\label{thm:cograph}
	There is a distributed interactive proof with two rounds for the recognition of cographs graphs, i.e. $\emph{\cograph}\in \dAM[\log n]$. Moreover, the obtained protocol uses shared randomness and gives the correct answer with high probability.

\end{theorem}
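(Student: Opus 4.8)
The plan is to reduce the whole computation to a single centralized fingerprinting step, exactly as in the Broadcast Congested Clique algorithm of~\cite{kari2015solving}, and then transport it into $\dAM$ through the routing machinery already developed. Concretely, I would design the underlying two-round protocol as follows. Merlin first supplies a depth-two spanning tree as in Lemma~\ref{BCC} (with the usual parent/level pointers, so that it is locally checkable), which by Lemma~\ref{lem:routing} lets us assume without loss of generality that a distinguished root $\rho$ receives every node's message together with the shared coin. Arthur then draws a single shared field element $t\in\mathbb{Z}_p$ for a prime $p=\poly(n)$ (picked from the identifiers, which bound $n^c$). Each node $w$ computes the pair $(a_w,b_w)=(\phi_w(t),q_w(t))$, where $\phi_w(x)=x^{\id(w)}$ and $b_w=\sum_{w'\in N(w)}a_{w'}$ is obtained from a single broadcast among neighbours; by construction $m=((a_w,b_w))_{w\in V}$ is \emph{valid for $G$ in $t$}. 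Each pair has $\cO(\log n)$ bits and is routed up to $\rho$, local consistency checks guaranteeing that Merlin cannot tamper with them.

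Once $\rho$ holds the valid vector $m$, it re-enacts the canonical twin elimination purely from the fingerprints: at each stage it seeks the lexicographically smallest surviving pair $(u,v)$ with $a_u\neq a_v$ and either $b_u=b_v$ (a candidate false-twin pair) or $a_u+b_u=a_v+b_v$ (a candidate true-twin pair), removes $v$, and updates $m$ by the rule of Lemma~\ref{lem:canonTwin} (setting $\delta_{uv}=1$ exactly when $a_u+b_u=a_v+b_v$). The root accepts iff this collapses the graph to a single vertex; otherwise, or if any routing/tree check fails, the corresponding node rejects. A point I would stress is that $\rho$ performs the reduction itself rather than trusting an order \emph{announced} by Merlin: this is what keeps the analysis immune to Merlin fixing an order adaptively after seeing $t$.

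For \textbf{completeness}, if $G$ is a cograph the tree of Lemma~\ref{BCC} exists and $\rho$ obtains the genuine $m$. If $u,v$ are true twins then $\bar{q}_u=\bar{q}_v$ as polynomials, and if they are false twins then $q_u=q_v$; hence the tested equalities hold for \emph{every} $t$, so a genuine twin is never missed, and by Proposition~\ref{cographCharac} such a pair exists until a single vertex remains. The only way to fail is $a_u=a_v$ for a genuinely distinct pair, i.e. $t$ being a root of some $\alpha_{u,v}=\phi_u-\phi_v$. For \textbf{soundness}, if $G$ is not a cograph then either no valid depth-two tree exists and routing is rejected locally, or routing succeeds and $\rho$ runs the reduction on the genuine $m$; by Proposition~\ref{cographCharac} the canonical elimination must get stuck at a twin-free induced subgraph with at least two vertices, and $\rho$ reports that no admissible pair exists. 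The only danger is a spurious collision, i.e. a non-twin pair with $b_u=b_v$ or $a_u+b_u=a_v+b_v$, which happens exactly when $t$ is a root of a nonzero $\beta_{u,v}$ or $\gamma_{u,v}$.

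The heart of the argument, and the step I expect to be the main obstacle, is therefore bounding the probability that the shared $t$ is ``bad'' for any of these identities simultaneously. Here I would invoke the \emph{canonical family}: the polynomials $\phi^i_u$ together with the derived $\alpha,\beta,\gamma$ arising along the (deterministic) canonical reduction form a set of at most $3n^3$ polynomials, each of degree at most $\max_w\id(w)=\poly(n)$. Because $\rho$ follows exactly this deterministic reduction, every equality it ever tests compares two members of this fixed family, so a nonzero difference vanishes at $t$ with probability at most $\poly(n)/p$ by Schwartz--Zippel. A union bound over the $\cO(n^3)$ relevant polynomials, with $p$ a sufficiently large $\poly(n)$ (say $p\ge n^{c+5}$), makes the total error $\cO(1/n)$, giving both completeness and soundness with high probability. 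Finally, all exchanged messages---tree pointers, the coin $t$, and the fingerprints---are $\cO(\log n)$ bits and there are two rounds (Arthur's coin followed by Merlin's certificate, folded through Lemma~\ref{lem:routing}), so the protocol witnesses $\cograph\in\dAM^{\pub}[\log n]$.
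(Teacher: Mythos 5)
Your proposal is correct and follows essentially the same route as the paper: route all fingerprints to the root of the depth-two tree from Lemma~\ref{BCC}, have the root re-run the canonical twin elimination on the pairs $(\phi_w(t),q_w(t))$ via Lemma~\ref{lem:canonTwin}, and bound the error by a union bound over the $\cO(n^3)$ polynomials of the canonical family. The only cosmetic difference is that the paper first re-labels the identifiers into $[n]$ (verified by the root as a permutation) so the degrees drop to $n$ and a prime $p=\Theta(n^{c+4})$ suffices, whereas you keep $\id(w)\leq n^c$ and compensate with a larger $\poly(n)$ prime; both yield $\cO(\log n)$ bandwidth.
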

\begin{proof}
	Let $G=(V,E)$ be an $n$-node graph. Without loss of generality we may assume the graph has identifiers in $[n]$ as, following Lemma \ref{BCC}, it is possible to implement a \permutation\ protocol in a single round:  Merlin sends to each node $v$ an identifier $\bar{\id}: V\to  [n]$ and the root, by receiving all proofs, can see that they all received distinct identifiers which are consistent with their original ones.
	
	Let $p$ be a prime such that $3n^{c+4}\leq p \leq 6n^{c+4}$. The protocol is the following: All nodes collectively generate a seed $t\in \mathbb{F}_p$ uniformly at random. Then Merlin sends to each node $w$  a message $m_w$ such that $m=(m_w)_{w\in V}$ is a valid vector for $G$ at $t$. Each node then computes such message by defining $\phi_w(x)=x^{\bar{\id}(w)}$.
	
	After the nodes exchange messages, following Lemma \ref{BCC} we obtain that the root $\rho$ owns a vector $m \in \mathbb{F}_p^{2n}$. From here, the root repeats the following procedure  at most $n-1$ times trying to construct a canonical ordering $\{v_i\}_{i=1}^n$ for $G$.
	
	At step $i$, it starts at graph $G^i$ and a vector $m_i\in \mathbb{F}_p^{2(n-i+1)}$  (where $G^1 = G$ and $m^1=m$) and looks for a pair of nodes $u,v$ in $G^i$ such that $a^i_u \neq a^i_v$ and either $b^i_u = b^i_v$ or $a^i_u + b^i_u = a^i_v +b^i_b$. 
	
	Then it chooses, among all pairs it has found, the first in lexicographic order. If no such pair exists, then he rejects. On the contrary, he defines $G^{i+1} = G^i-v$, and setting $v_{n-i+1} = v$ (without loss of generality we assume that $\id({v})<\id({u})$). Then the root computes $m^{i+1}$ from the previous vector $m^i$ following Lemma \ref{lem:canonTwin}. If the root reaches step $n-1$ then it accepts.

	\begin{itemize}
		\item {\completeness \& \soundness } It follows then that as the messages depend on the original identifiers and the root $\rho$ has access to all messages, then both acceptance errors depend solely in the \textsf{1BCC} construction. Now, by Lemma \ref{lem:canonTwin} it follows that the only point at which the protocol might fail is if the chosen $t$ turns out to be a root for any of the polynomials in the canonical family from Definition \ref{canonical Fam}. As there are at most $3n^3$ such polynomials, each of degree at most $n$ , we have that  the acceptance error is at most $3n^{4} / 3n^{c+4} = 1/n^c$ and the theorem follows.
	\end{itemize}
\end{proof}

As we mentioned in the introduction, the result given in \cite{kari2015solving} shows a stronger result. In fact, the referee not only can recognize a cograph but actually can \emph{reconstruct} it. In other words, when the input graph is a cograph, after the communication round the referee learns all the edges. In our context, this implies that the root $\rho$ not only recognizes cographs, but also can recognize any distributed language restricted to them.

\begin{theorem}
For every distributed language $\mathcal{L}$, there is a distributed interactive proof with two rounds for its restriction to cographs, i.e. $\mathcal{L}_\cograph \in \dAM[\log n]$. Moreover, the protocol uses shared randomness and gives the correct answer with high probability.

\end{theorem}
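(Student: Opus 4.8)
The plan is to build directly on Theorem~\ref{thm:cograph} and, in particular, on the stronger reconstruction property of the \textsf{1BCC} protocol of \cite{kari2015solving} that is invoked in the discussion following it. The key observation is that the protocol underlying Theorem~\ref{thm:cograph} does more than accept or reject: through the canonical ordering and the valid vectors, the root $\rho$ actually learns the entire edge set of $G$ when $G$ is a cograph. Concretely, Lemma~\ref{lem:canonTwin} lets $\rho$ track, at each step $i$, which pairs are twins and — via the indicator $\delta_{uv}$ — whether each such twin pair is a true twin (adjacent) or a false twin (non-adjacent). Since a cograph is reconstructed bottom-up by repeatedly expanding a node into a pair of true or false twins, recording this adjacency information at every step of the canonical ordering is exactly enough to recover $G$ completely. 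Thus, by the end of the interactive phase, $\rho$ holds (with high probability) both the adjacency structure of $G$ and, after augmenting the certificates, the input labels of every node.

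The protocol for $\mathcal{L}_\cograph$ then proceeds in two stages. First I would run the protocol of Theorem~\ref{thm:cograph} verbatim, so that $\rho$ either rejects (if $G$ is not a cograph) or reconstructs $G$. The only modification needed is that each node $w$ must also forward its input label $I(w)$ up the spanning tree; by Lemma~\ref{lem:routing} this costs only an additive $\cO(\log n)$ in bandwidth per node, since inputs are polynomially bounded in $n$, and the root collects all labels together with the edges. Second, once $\rho$ possesses the full configuration $\langle G, \id, I\rangle$, it simply runs the Turing machine deciding $\mathcal{L}$ — Merlin has unlimited computation, but here we need $\rho$ itself to decide, which is fine because $\rho$ is a node of the distributed verifier and the final decision step is a local (unbounded-time) computation on the data it has accumulated. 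If the decision is \emph{accept}, $\rho$ accepts; otherwise it rejects, raising the alarm.

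Correctness splits along the usual completeness/soundness lines. For \textbf{completeness}, if $\langle G,\id,I\rangle\in\mathcal{L}$ with $G$ a cograph, the honest Merlin supplies valid vectors; with high probability $t$ is not a root of any canonical-family polynomial, so $\rho$ reconstructs $G$ exactly, reads off the true labels, and the local decision of $\mathcal{L}$ returns \emph{accept} at every node. For \textbf{soundness}, suppose $\langle G,\id,I\rangle\notin\mathcal{L}_\cograph$. Either $G$ is not a cograph — in which case the soundness of Theorem~\ref{thm:cograph} already forces rejection with high probability — or $G$ is a cograph but $\langle G,\id,I\rangle\notin\mathcal{L}$. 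In the latter case the reconstruction is faithful regardless of Merlin's strategy (the valid-vector machinery pins down the true graph up to the negligible probability that $t$ hits a polynomial root), so $\rho$ recovers the genuine $G$ and $I$ and the $\mathcal{L}$-decision returns \emph{reject}.

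The main obstacle to watch is guaranteeing that a \emph{cheating} Merlin cannot trick $\rho$ into reconstructing a \emph{different} cograph $G'$ for which $\langle G',\id,I\rangle\in\mathcal{L}$, thereby forcing an incorrect acceptance; this is precisely where the algebraic soundness of the valid-vector scheme is essential. I would argue that the consistency checks performed by the depth-one nodes in Lemma~\ref{lem:routing}, together with the fact that each node computes its own $\phi_w(x)=x^{\bar\id(w)}$ from its fixed identifier, force the reconstructed adjacencies to match the real ones except with probability $\cO(1/n^c)$ over the choice of $t$ — exactly the failure bound in Theorem~\ref{thm:cograph}. Handling the tampering of input labels requires the same care: each $I(w)$ must be tied to $w$'s identifier so that Merlin cannot relabel nodes, which the reconstruction already anchors. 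Once these consistency guarantees are in place the bandwidth remains $\cO(\log n)$, the protocol stays within two rounds and uses only shared randomness, and the theorem follows.
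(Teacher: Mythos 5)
Your proposal is correct and follows essentially the same route as the paper: the root $\rho$ of the depth-two spanning tree from Lemma~\ref{BCC} collects all certificates, reconstructs the full configuration $\langle G,\id,I\rangle$ via the twin-ordering and valid-vector machinery, and locally decides $\mathcal{L}$, with the error probability inherited from Theorem~\ref{thm:cograph}. Your write-up is merely more explicit than the paper's about how reconstruction follows from the $\delta_{uv}$ adjacency indicators and about forwarding the input labels (which, as in the paper, implicitly requires the labels to fit in the $\cO(\log n)$ bandwidth).
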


\begin{proof}
	It is sufficient to notice that the tree-root $\rho$ in the construction from Lemma~\ref{BCC} has access to all proofs in the network. In particular, the \id's and positions for each node in the twin-ordering $\pi$. As such, $\rho$ has knowledge of the entire topology of the network and its inputs (provided that these are of size $\cO(\log n)$) and can compute any property related to them, with the acceptance error matching that of the verification procedure in Theorem~\ref{thm:cograph}. As for the rest of the nodes, they simply accept and delegate this decision to the root.
\end{proof}

\section{Distance Hereditary Graphs}\label{sec:dist_hed}
	Following the protocol described for cographs, it is possible to derive an interactive protocol for distance-hereditary graphs, which admit a similar construction. Indeed, as described before, any distance-hereditary graph can be constructed by sequentially adding twins or pending nodes. 
Notice that for the protocol in Theorem~\ref{thm:cograph}, the verification process is done by the root as it prunes the graph for $n-1$ steps. This leads to an order by which the nodes were selected, and we call it \emph{canonical} ordering. While we can not delegate the verification routine to a single node (as distance-hereditary graphs can have arbitrarily large diameter),  we can distribute the verification process by letting different nodes check different steps of the computation. As  the rule described in Lemma~\ref{lem:canonTwin} for pruning the graph involves only the pair of twins at each step, we only need to find nodes that, for a fixed node $v$, can receive all the proofs sent by $v$, its twins and its pending nodes.

First, in order to prune the graph in this new setting, we need a rule for pruning pending nodes from a graph and updating the vectors of each node accordingly. Here, we use the definition of a \emph{valid} vector as described in Section~\ref{sec:cograph}.
\begin{lemma}\label{lem:disthed}
	Let  $m= ((a_w,b_w))_{w\in V} \in (\mathbb{{Z}}_p)^{2n}$ be a valid vector for $G$ at some point $t$. If $u\in G$ has $v$ as a pending node adjacent to it, then, the vector $m'=  ((a'_w,b'_w))_{w\in V-v} \in (\mathbb{{Z}}_p)^{2n-2}$ is valid for $G-v$ in $t$, where the coordinates of $m'$ are given by
	\[(a'_w, \: b'_w)=\begin{cases}
	(a_w, \:b_w) & \text{ if } w \in V- \{u,v\}\\
	(a_w , \: b_w-a_v) &\text{ if } w = u
	\end{cases}\]
	
\end{lemma}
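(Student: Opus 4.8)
The plan is to mimic the proof of Lemma~\ref{lem:canonTwin}, exploiting the same linear-algebraic setup: recover a family of linearly independent polynomials $(\phi_w)_{w\in V}$ witnessing validity of $m$, then exhibit an explicit family $(\phi'_w)_{w\in V-v}$ for the pruned graph $G-v$ and check that its evaluations at $t$ match the claimed coordinates of $m'$. The key structural difference from the twin case is that we are deleting a \emph{pending} node $v$ whose unique neighbor is $u$; so the only adjacency that disappears is the edge $uv$, and therefore the neighborhood sums change only at $u$.

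\begin{proof}
Let $(\phi_w)_{w\in V}$ be a family of linearly independent polynomials witnessing that $m$ is valid for $G$ at $t$, so that $a_w=\phi_w(t)$ and $b_w=q_w(t)=\sum_{w'\in N(w)}\phi_{w'}(t)$ for every $w\in V$. Since $v$ is pending with unique neighbor $u$, we simply keep the polynomials of all surviving nodes unchanged, defining $\phi'_w=\phi_w$ for every $w\in V-v$. This family is a subfamily of the original one, hence linearly independent. It remains to verify that the neighborhood sums $q'_w=\sum_{w'\in N_{G-v}(w)}\phi'_{w'}$ evaluate at $t$ to the claimed values $b'_w$.

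For $w\in V-\{u,v\}$, the node $v$ does not belong to $N(w)$ (its only neighbor is $u$), so deleting $v$ leaves $N(w)$ unchanged; thus $q'_w=q_w$ and $b'_w=b_w$, matching the first case. For $w=u$, deleting $v$ removes exactly the one term $\phi_v$ from the neighborhood sum, so $q'_u=q_u-\phi_v$, and evaluating at $t$ gives $b'_u=q'_u(t)=q_u(t)-\phi_v(t)=b_u-a_v$, which is precisely the second case. The polynomial $\phi'_u=\phi_u$ is unchanged, so $a'_u=a_u$, as claimed. Hence $(\phi'_w)_{w\in V-v}$ witnesses that $m'$ is valid for $G-v$ at $t$, completing the proof.
\end{proof}

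I do not expect any genuine obstacle here: unlike the twin case, there is no need to merge two polynomials (and hence no hypothesis $a_u\neq a_v$ nor any adjacency indicator $\delta_{uv}$), because a pending deletion affects only a single neighborhood sum. The only point demanding care is the bookkeeping of which neighborhoods change under the deletion of $v$ — and since $v$ is pending, this is confined to $u$, making the update of $b_u$ the one substantive line of the argument.
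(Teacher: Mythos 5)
Your proposal is correct and follows essentially the same route as the paper's own proof: keep the original family of linearly independent polynomials restricted to $V-v$, and observe that only the neighborhood sum at $u$ changes, giving $b'_u = b_u - a_v = q_u(t) - \phi_v(t) = q'_u(t)$. Your version just spells out the bookkeeping for the unaffected nodes more explicitly.
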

\begin{proof}
	If $\{\phi_w\}_{w\in V}$ is a family of linearly independent polynomials associated to $m$, we can use this family and, as $v$ was only connected to $u$, it follows that $b'_u = b_u - a_v = q_u(t)-\phi_v(t) = \sum_{w\in N(u)-v} \phi_w(t) = q'_u(t)$.
\end{proof}
 
In order to distribute the verification procedure, for any fixed $v$ we wish to set a node to compute the correctness of the vectors of all nodes assigned as twins of $v$. Indeed, for a fixed ordering $\pi$ for pruning the graph and a node $v$ with $\pi_v<n$, consider the \emph{predecessor} of $v$, denoted by $\ant(v)$, to be $v$'s neighbor whose value for $\pi(\cdot)$ is immediately after that of $v$ among its neighbors. As all previous nodes in the order which are twins of $v$ have the same neighborhood, it follows that all these nodes must be adjacent to $\ant(v)$. In case that no such a node exists, by assuming that $G$ is connected, it follows that the last node according to $\pi$ which is assigned as a twin of $v$ must be a \emph{true} twin and, therefore, be adjacent to him. And the same reasoning holds.
 
 In our protocol, the prover will provide the nodes with the ordering in which they will be pruned. Naturally, this ordering has to be verified in order to preserve the soundness of the protocol. The language \permutation\ refers to the one in which each node $v$ of a graph owns an input $a_v\in [n]$, where all inputs are distinct. Formally: 
 	 \[\permutation = \{ \config \text{ s.t. }I: V \to [n]\text{ is injective.}\}\]

This problem was addressed in~\cite{naor2020power}, where the authors provide a two-round distributed interactive proof recognizing the language with high probability and using shared randomness. 

\begin{proposition}[\cite{naor2020power}]
\label{prop:perm}
	There is a two-round distributed interactive proof  recognizing the language \emph{\permutation}, i.e. $\emph{\permutation}\ \in \dAM[\log n]$. The protocol uses shared randomness and gives the correct answer with high probability.

\end{proposition}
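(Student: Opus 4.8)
The plan is to verify \permutation\ by recasting it as a single polynomial identity tested at a shared random point, with the global product computed by aggregation along a spanning tree. The starting observation is that $I\colon V\to[n]$ is a bijection if and only if the multiset $\{I(v)\}_{v\in V}$ equals $\{1,\dots,n\}$, which in turn holds if and only if the two monic degree-$n$ polynomials $\prod_{v\in V}(x-I(v))$ and $\prod_{i=1}^{n}(x-i)$ are equal in $\mathbb{Z}_p[x]$, for a prime $p=\poly(n)$ chosen larger than all identifiers and inputs. Two distinct monic degree-$n$ polynomials agree on at most $n$ points, so evaluating both at a uniformly random $t\in\mathbb{Z}_p$ separates any NO-instance with probability at least $1-n/p$.

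The protocol runs as follows. First Arthur draws a shared random seed $t\in\mathbb{Z}_p$ and sends it to Merlin. Merlin then provides every node $v$ with four $\cO(\log n)$-bit quantities: a parent pointer and a distance-to-root label describing a spanning tree rooted at a canonical node (the unique node whose distance label is $0$), together with a value $P_v\in\mathbb{Z}_p$ intended to equal $\prod_{w}(t-I(w))$ over the subtree of $v$, and a subtree size $s_v$. In the deterministic verification round each node broadcasts these quantities. Every node then performs the standard spanning-tree consistency check (its distance label exceeds its parent's by one, and its children are exactly the neighbors whose broadcast parent pointer equals its own identifier), verifies the recurrences $P_v=(t-I(v))\prod_{c}P_c$ and $s_v=1+\sum_{c}s_c$ over its children $c$, and the root $\rho$ additionally verifies $P_\rho=\prod_{i=1}^{s_\rho}(t-i)$. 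Since every message is $\cO(\log n)$ bits, this is a two-round Arthur--Merlin protocol of the required bandwidth using shared randomness, i.e. $\dAM^{\pub}[\log n]$.

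Completeness is immediate: on a YES-instance the honest Merlin supplies a genuine spanning tree together with the true subtree products and sizes, so every local recurrence holds and, as $s_\rho=n$, the root's identity $P_\rho=\prod_{i=1}^{n}(t-i)$ holds as well. For soundness the crux is that Merlin cannot lie about the aggregated product: if every local recurrence $P_v=(t-I(v))\prod_c P_c$ passes — otherwise some node already rejects — then a straightforward induction up the tree forces $P_\rho=\prod_{v\in V}(t-I(v))$, the \emph{true} product, leaving the prover no freedom. Likewise the classical distance-label certificate guarantees that the attested tree is genuinely spanning and acyclic, so no vertex can be omitted from the aggregation while still exhibiting a consistent path to the root; hence the size recurrence forces $s_\rho=n$. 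On a NO-instance the root's final test therefore reduces to comparing the two distinct polynomials at the random point $t$, which coincide with probability at most $n/p$; choosing $p$ polynomially large drives the error below $1/n$, yielding rejection with high probability.

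I expect the main obstacle to lie in the soundness bookkeeping of the tree certificate rather than in the fingerprinting step: one must argue carefully that no malicious parent/distance assignment can either exclude a vertex from the product (which would corrupt $P_\rho$ and $s_\rho$) or create a spurious cycle, relying on the well-known fact that spanning trees admit a proof-labeling scheme with $\cO(\log n)$-bit certificates. Once the tree is pinned down, the forced-product observation together with the Schwartz--Zippel estimate closes the argument, establishing Proposition~\ref{prop:perm} in agreement with the construction of \cite{naor2020power}.
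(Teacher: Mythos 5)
The paper does not actually prove this proposition: it is imported verbatim from Naor, Parter and Yogev \cite{naor2020power}, so there is no internal proof to compare yours against. Your reconstruction follows the standard (and, as far as I can tell, the intended) route: encode the multiset $\{I(v)\}_{v\in V}$ as the polynomial $\prod_{v}(x-I(v))$, compare it to $\prod_{i=1}^{n}(x-i)$ at a shared random point $t\in\mathbb{Z}_p$, and let Merlin aggregate the product up a certified spanning tree, with the recurrences $P_v=(t-I(v))\prod_c P_c$ and $s_v=1+\sum_c s_c$ forcing the root's value to be the true product. The round structure is right for $\dAM$ (the tree does not depend on $t$, so a single Merlin reply after Arthur's coin suffices), the bandwidth is $\cO(\log n)$, and the $n/p$ soundness bound is correct.

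One genuine gap in the protocol as you literally wrote it: certifying the tree by a parent pointer plus a distance label, with ``the root is the unique node whose distance label is $0$,'' does not let any node verify that this root is in fact unique. Each node only checks its own parent and the neighbors that claim it as parent; nothing prevents Merlin from declaring two (or more) nodes to have distance $0$, splitting the connected graph into a spanning \emph{forest} in which every local check passes. Each root $\rho_j$ would then verify only $P_{\rho_j}=\prod_{i=1}^{s_{\rho_j}}(t-i)$, so an input assignment whose restriction to each part is a permutation of $[s_{\rho_j}]$ — e.g.\ two parts each carrying $\{1,\dots,n/2\}$ — would be accepted although it is a \prob{no}-instance of \permutation. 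The fix is the standard one you allude to but do not make explicit: include the root's identifier in every certificate and have each node check, during the verification round, that all of its neighbors report the same root; in a connected graph this pins down a single tree, after which your induction and the Schwartz--Zippel step go through. With that check added, the argument is complete.
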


 Thus, the main strategy of our protocol is that, given an initial vector $(a_v, b_v)$ for a node $v$ in the graph, each node $\ant(v)$ has the task of updating this vector until it obtains the correct vector $v$ should have at the time he is pruned from the graph, which we denote by $(a^\pi_v, b^\pi_v)$, for this, each node $u$ which is a twin of $v$ provides its vector $(a^\pi_u, b^\pi_u)$ (which is proved to be correct by some other node) and so the predecessor of $v$ compares and updates $v$'s vector according to the rules from Lemmas~\ref{lem:canonTwin} and~\ref{lem:disthed}. 

\begin{theorem}\label{thm:distHed}
	There is a distributed interactive proof with three rounds for the recognition of distance-hereditary graphs, i.e. $\emph{\distHereditary}\in \dMAM[\log n]$. Moreover, the protocol uses shared randomness and gives the correct answer with high probability.
\end{theorem}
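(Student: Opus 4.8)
The plan is to distribute the centralized pruning of Theorem~\ref{thm:cograph} across the network, since here a single referee no longer exists: a distance-hereditary graph may have large diameter, so Lemma~\ref{BCC} fails and no depth-two spanning tree can gather all messages. Instead I would let each node audit the removals that affect it. Concretely, I would build a $\dMAM^{\pub}[\log n]$ protocol whose three messages are used as follows. In the first (Merlin) round, Merlin equips every node $v$ with its position $\pi(v)$ in a purported elimination ordering of the kind guaranteed by the characterization just above the theorem, the \emph{type} of $v$ at its removal (true twin, false twin, or pending), the identity $\sigma(v)$ of the node it is removed against, and the identity of its predecessor $\ant(v)$. In the second (Arthur) round the nodes draw a shared seed $t\in\mathbb{F}_p$, with $p$ a prime of size $\poly(n)$ chosen as in Theorem~\ref{thm:cograph}. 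In the third (Merlin) round Merlin sends each $v$ the vector $(a^\pi_v,b^\pi_v)$ it should hold at the instant it is pruned. In parallel I would run the permutation protocol of Proposition~\ref{prop:perm} on the values $\pi(\cdot)$: being a $\dAM$ protocol, its Arthur and Merlin messages fit exactly into rounds two and three, so the whole composition stays within $\dMAM[\log n]$ with shared randomness.

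The verification is a single communication round. Each node first computes its honest initial vector $(a_v,b_v)=\big(t^{\pi(v)},\ \sum_{w\in N(v)}t^{\pi(w)}\big)$, which it can do because it learns the positions $\pi(w)$ of its neighbours in this round; since $\pi$ is a verified permutation the monomials $x^{\pi(w)}$ are linearly independent, and Merlin has no control over these vectors. Every node broadcasts its position, its type, $\sigma(\cdot)$, its self-computed initial vector and its Merlin-claimed final vector. The structural observation preceding the theorem drives the distribution: each node $u$ removed against $v$ shares $v$'s neighbourhood at the time of its removal and is therefore adjacent to $\ant(v)$, so in this single round $\ant(v)$ obtains $v$'s initial vector together with the final vectors of all $u$ with $\sigma(u)=v$. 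Node $\ant(v)$ then replays the removals affecting $v$ in decreasing order of $\pi$, starting from $v$'s initial vector and applying the twin update of Lemma~\ref{lem:canonTwin} (reading $\delta_{uv}$ off the fingerprint condition) or the pending update of Lemma~\ref{lem:disthed}; at each step it checks both the inequality $a^\pi_u\neq a^{\mathrm{cur}}_v$ and the matching false-twin identity $b^\pi_u=b^{\mathrm{cur}}_v$ or true-twin identity $a^\pi_u+b^\pi_u=a^{\mathrm{cur}}_v+b^{\mathrm{cur}}_v$, and it finally verifies that the accumulated vector equals the claimed $(a^\pi_v,b^\pi_v)$. The unique surviving node (position one) is accepted as the trivial base case. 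Because each node broadcasts a \emph{single} final vector to all its neighbours, the value of $u$ used by $\ant(v)$ is forced to coincide with the one independently audited by $\ant(u)$, which is precisely what fuses the local checks into one consistent global simulation of the pruning.

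For completeness, when $G$ is distance-hereditary an honest Merlin supplies a genuine elimination ordering together with the true fingerprints; every arithmetic identity then holds exactly, the permutation sub-protocol accepts, and all nodes accept with high probability. For soundness, if $G$ is not distance-hereditary then no valid elimination ordering exists, so under any prover either the announced $\pi$ fails to be a permutation---rejected with high probability by Proposition~\ref{prop:perm}---or some announced twin/pending step is spurious. The accumulation checks force the reported final vectors to be the honest evaluations of the induced polynomials, so a spurious step can survive only when $t$ is a root of one of the $\cO(n^3)$ difference polynomials (each of degree at most $n$) in the family analogous to Definition~\ref{canonical Fam}; with $p=\poly(n)$ this occurs with probability at most $\poly(n)/p$, yielding the claimed high-probability guarantee exactly as in Theorem~\ref{thm:cograph}.

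I expect the main obstacle to be the distribution argument itself: proving that a single well-chosen auditor $\ant(v)$ is adjacent, in $G$, to \emph{every} node removed against $v$, so that no claimed twin or pending relation escapes verification. The delicate points are the treatment of false twins of $v$ (which are non-adjacent to $v$ and so must be reached through the shared neighbourhood rather than directly) and the degenerate case in which $v$ has no neighbour of larger position, where one must fall back---using connectivity of $G$---on the last node removed against $v$, which is necessarily a \emph{true} twin and hence adjacent to $v$. A secondary point requiring care is that replaying the removals in $\pi$-order at $\ant(v)$ reproduces exactly the intermediate vectors of the centralized pruning, so that the per-step conditions of Lemmas~\ref{lem:canonTwin} and~\ref{lem:disthed} are tested against the correct current value of $v$; once the accumulation is recognized as a commutative sum of the individual contributions, this last point reduces to bookkeeping.
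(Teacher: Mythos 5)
Your architecture is the same as the paper's: certify an elimination ordering via Merlin, check it is a permutation with Proposition~\ref{prop:perm} folded into the last two rounds, fingerprint neighbourhoods with the vectors $(a_v,b_v)$, and delegate the audit of all removals against $v$ to the single node $\ant(v)$, including the fallback to the last (necessarily true) twin when $v$ has no neighbour of larger position. However, there is a genuine gap at the heart of the distribution argument. You assert that ``each node $u$ removed against $v$ shares $v$'s neighbourhood at the time of its removal and is therefore adjacent to $\ant(v)$.'' This is true for twins but provably false for pending nodes: if $u\in\textsf{Pending}(v)$ then $v$ is the \emph{unique} neighbour of $u$ among the nodes not yet removed, so $u$ cannot be adjacent to $\ant(v)$ (otherwise $\ant(v)\in N^*(u)$). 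Consequently $\ant(v)$ never sees the broadcasts of $v$'s pending nodes, yet each such removal decrements $b_v$ by $a_u$ (Lemma~\ref{lem:disthed}), and these decrements are interleaved with the twin removals. Since the per-step twin checks ($b^\pi_{w_i}=b^{\mathrm{cur}}_v$ or $a^\pi_{w_i}+b^\pi_{w_i}=a^{\mathrm{cur}}_v+b^{\mathrm{cur}}_v$) must be evaluated against the \emph{current} partial value of $b_v$, the pending contributions cannot be treated as a commutative sum applied at the end as you suggest: with the wrong partial sums completeness already fails, and leaving them unverified lets Merlin forge them.

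The paper spends roughly half of its certificate structure on exactly this point: Merlin additionally sends $|\textsf{Pending}(v)|$ and $\textsf{m-leaf}(v)$ to $v$, a $\textsf{co-twin}$ pointer and the count of pending nodes falling between consecutive twins to each $w\in\textsf{Twins}(v)$, and, after the random seed, the aggregate encodings $P(v,t)=\sum_{u\in\textsf{Pending}(v)}\phi_u(t)$ and the partial sums $\sum_{w\in S(v)}a_w$. The node $v$ (which does see its pending nodes) certifies the total count, the twins $w_i$ certify the counts in each gap, and $\ant(v)$ cross-checks that the gap counts add up to $|\textsf{Pending}(v)|$ before injecting the encodings $P(i,t)$ at the right steps of the replay. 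Without some mechanism of this kind your auditor cannot reproduce $(a^\pi_v,b^\pi_v)$, so the soundness argument does not go through as written.
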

\begin{proof}
	We need to check that, given an ordering $\pi$,  we can compute the coordinates for a valid vector at each step of the computation by delegating this information to the correct nodes. For each node $v$, we define $\textsf{twin}(v)$ to be the node assigned as its twin, $\ant(v)$ to be its predecessor according to the previous definition, $\textsf{Twins}(v)$ to be the set of nodes $u$ such that $\textsf{twin}(u)=v$ and $\textsf{Pending}(v)$ to be the set of nodes that have $v$ as their unique neighbor according to $\pi$.

	First, Merlin sends, to each node $v$
	
	\begin{enumerate}
		\item Its position in the ordering $\pi$ given by $\pi_v$. Let us call $N^*(v)$ the set of neighbors of $v$ that are after it in $\pi$. Formally, $N^*(v) = N(v) \cap \{w: \pi_w > \pi_v\}$.
		\item The \id\ of the unique neighbor which is immediately after it at $\pi$, which we denoted by $\ant(v)$. Formally, 
		$\ant(v) = \textrm{argmin}\{\pi_u: u \in N^*(v)\}$
	
		\item If $v$ is removed from the graph as a \emph{pending node}
		\begin{enumerate}
			\item The \id\ of the node to which $v$ is a pending node, denoted by $\textsf{pending}(v)$. Formally,
			
			$\textsf{pending}(v) = u$ if and only if $N^*(v) = \{u\}$. 
			
		\end{enumerate}
		\item If $v$ is removed from the graph as a \emph{twin},
		\begin{enumerate}
			\item  The $\id$ of $v$'s twin at step $\pi_v$ of the computation, denoted by $\textsf{twin}({v})$. Formally, 
			$\textsf{twin}(v) = \textrm{argmin}\{\pi_u : N^*(v) = N(u)\cap \{w: \pi_w > \pi_v \}\text{ and } \pi_u >\pi_v\}$.

			\item The \id\ of $\ant(\textsf{twin}(v))$, i.e. the \id\ of the predecessor of the twin of $v$ according to $\pi$.
			\item A single bit, indicating whether $\textsf{twin}(v)$ it is a \emph{true} or \emph{false} twin of $v$.
		\end{enumerate}
		
		\item The number of neighbors $u$ of $v$ such that $\textsf{twin}(u) = v$ according to $\pi$, denoted by $|\textsf{Twins}(v)| = |\{u : \textsf{twin}(u) = v\}|$.
	\end{enumerate}
	Unfortunately, this set of certificates are not sufficient. By receiving them, $\ant(v)$ can collect all proofs received by each node $u$ with $\textsf{twin}(u)=v$ yet we cannot follow the same decomposition in order to check that the pair $(a_v^\pi, b_v^\pi)$ is correct. Indeed, the nodes in $\textsf{Pending}(v) = \{w: \textsf{pending}(w)=v\}$ are not adjacent to $\ant(v)$, so $\ant(v)$ can not see these proofs.
	
	To fix this, Merlin will distribute a set of proofs between $v$ and those in $\textsf{Twins}(v)$. First, Merlin sends to $v$:
	\begin{enumerate}
		\setcounter{enumi}{5}
		\item The number of nodes $u$ such that $\textsf{pending}(u) = v$, that is, $|\textsf{Pending}(v)|$.
		\item The \id\ of the node $u\in \textsf{Pending}(v)$ with the smallest value for $\pi$, denoted by $\textsf{m-leaf}(v)$, as well as $\pi_u$. Formally, $\textsf{m-leaf}(v) = \textrm{argmin}\{\pi_u : u \in \textsf{Pending}(v)\}$
	\end{enumerate}
	Now, fix some node $u$ such that $\textsf{twin}(u) = v$. Then, Merlin sends to $u$
	\begin{enumerate}
		\setcounter{enumi}{7}
		\item The \id\ of the node in $\textsf{Twins}(v)$ whose value for $\pi$ is immediately after $\pi_u$, denoted by $\textsf{co-twin}(u)$. Formally $\textsf{co-twin}(u) = \textrm{argmin}\{\pi_w : w \in \textsf{Twins}(v)\text{ and } \pi_w>\pi_u \}$
		\item The number of nodes in $\textsf{Pending}(v)$  with values for $\pi$ between those for $u$ and $\textsf{co-twin}(u)$.
	\end{enumerate}
	
	\begin{figure}[!ht]
		\centering
		\includegraphics[width=0.85\linewidth]{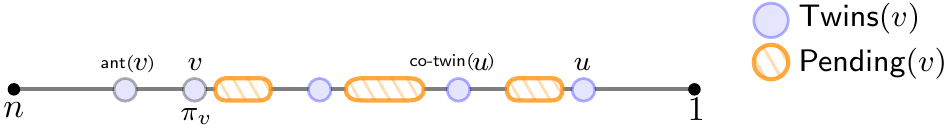}
		\caption{Visualization of $\ant(v)$ and the sets $\textsf{Pending}(v)$ and $(v)$ for  some node $v$: the neighbor $\ant(v)$ sees all nodes in $(v)$, which are the nodes $u$ that are assigned as twins of $v$. If we order these according to $\pi$, there may have nodes in $\textsf{Pending}(v)$ positioned between them.}
		\label{fig:disthered}
	\end{figure}
	After this first interaction, all nodes (collectively)  send a seed $t\in \mathbb{F}_p,$ with $p= \poly(n)$ a prime number to be defined accordingly. From here, Merlin answers to each node $v$ with the following set of messages. First, he sends the pair of coordinates $(a_v,b_v)$ which belong to $v$ at the start of the computation, as well as the pair he would have sent at step $\pi_v$ of the computation, denoted by $(a_v^\pi,b_v^\pi)$. Then, he sends the value $P(v,t) = \sum_{u\in \textsf{Pending}(v)}\phi_u(t)$  which is an encoding for the set of $v$'s pending nodes according to $t$. Finally, if $v$ is assigned as a twin, with $\textsf{twin}(v)=u$,  Merlin sends the message $\sum_{w\in S(v)}a_w$, where the set $S(v)$ is defined as
	\[ S(v)= \{ w \in \textsf{Pending}(u):    \pi_{\textsf{co-twin}(v)} > \pi _w > \pi_v\}  ,\]
	that is, the set of pending nodes connected to $v$'s twin such that they appear between $\textsf{twin}(v)$ and $\textsf{co-twin}(v)$.

	Finally, at the verification round, the nodes exchange their certificates and they try to collectively compute the correct values for $(a_v^\pi,b_v^\pi)$ as follows: 
	First, each node checks that its original values for the pair $(a_v,b_v)$ are correct, along with the size of its set of pending nodes $\textsf{Pending}(v)$ and  $\ant(v)$. Now, if $v$ is assigned as a leaf, it simply checks that it is adjacent to $\textsf{pending}(v)$ and that it is the only neighbor with a larger position at $\pi$. If $v$ is assigned as a twin instead, with $u = \textsf{twin}(v)$, it delegates this verification to the node $\ant(u)$ as follows.

	\begin{figure}
		\centering
		\includegraphics[width=0.85\linewidth]{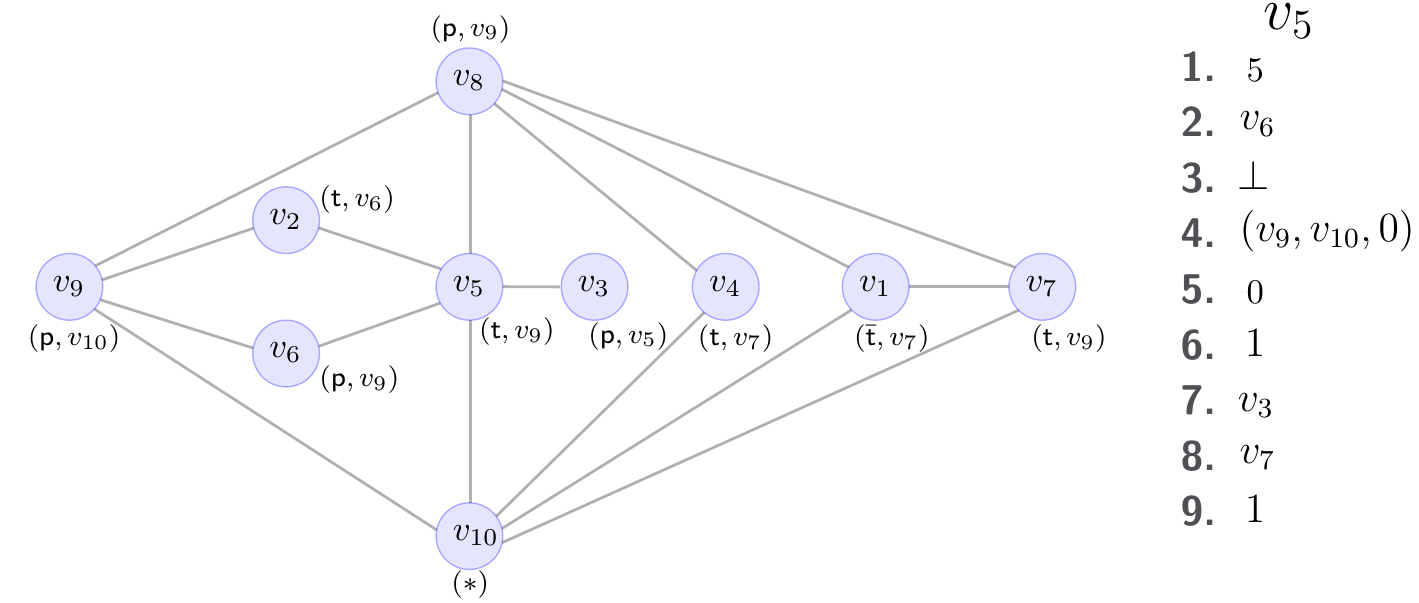}
		\caption{An example of a first round of interaction for a fixed node $v_5$. Given a distance-hereditary graph, each node is identified according to their ordering $\pi$ and labeled depending on the way they are removed from the graph. The first entry shows whether some node is a \emph{false twin} (\textsf{t}), a \emph{true twin} ($\bar{\textsf{t}}$) or a \emph{pending node} (\textsf{p}), with the second entry showing its assigned node on each of these cases. A list of all certificates sent by Merlin to $v_5$ according to the previous list is included.}
		\label{fig:distheredprot}
	\end{figure}
	
	First, the node $\ant(u)$ compares the number of nodes adjacent to it which are assigned as twins of $u$ with the size of the set $\textsf{Twins}(u)$ sent by Merlin. Then, it sorts the nodes in $\textsf{Twins}(u)$ in increasing order according to $\pi$ , and name them $w_1, \dots w_{|\textsf{Twins}(v)|}$. For each $i$, it checks that $\textsf{co-twin}(w_{i}) = w_{i+1}$. The only issue we are left to determine is the values for $P(i,t)= P(w_i,t)$. That is, the encoding for all pending nodes that lie between $w_i$ and $w_{i+1}$ for $i= 1, \dots \textsf{Twins}(u)$, where we set $w_{\textsf{Twins}(u)+1}= u$. We can assume that we know the encoding for any pending node that occurs before $w_1$ as they can be checked by $u$ and sent to $\ant(u)$ during the verification round. This is because Merlin can send to node $u$ the position of the first node which has $u$ as a twin. 
	
	To obtain $P(i,t)$, we simply need to show that $\ant(u)$ is able to partition the set $\textsf{Pending}(u)$ (and its encoding) into groups according to their positions in the permutation in-between nodes assigned as twins. Indeed, as $\ant(u)$ knows the size of $\textsf{Pending}(u)$ , it simply needs to sum the number of pending nodes that appear between $w_i$ and $w_{i+1}$ according to $w_i$ from message \lipics{(9)}. We know that these value are correct. Each $w_i$ can count them (it knows that $w_{i+1} = \textsf{co-twin}(v)$). Thus, $\ant(u)$ computes the size of the parts of $\textsf{Pending}(u)$ according to the collection $\{w_i\}_{i}$. If these values do not match the size of $\textsf{Pending}(u)$, it simply rejects. 
	
	Finally, as $\ant(u)$ knows that the partition is correct, it simply considers the values for $P(i,t)$ provided by $w_i$ and computes the values for $(a^i_v,b^i_v)$ at each step of the computation: At step $i$, it obtains a pair $(\bar{a}_v^i, \bar{b}_v^i)$, if $w_i$ is a \emph{true} twin, it  compares its sum with that of $(a_{w_i}^{\pi}b^{\pi}_{w_i})$ and rejects in case these values are not equal. Otherwise, it simply compares the values for $\bar{b}_v^i$ with $b^{\pi}_{w_i}$ and reject if these are not equal.
	
	 Then, in order to obtain $(a_v^{i+1},b_v^{i+1})$ it follows the rule from Lemma~\ref{lem:canonTwin}, deletes $P_i(t)$ from $b^{i+1}_v$ and goes to the next step. At the end of this computation, it remains to check whether the obtained pair equals $(a_v^{\pi}, b_v^{\pi})$, and accept or reject accordingly.	
	
	\begin{itemize}
		\item \completeness\ An honest prover will provide the nodes with the correct ordering. Then, as we have described above, the  nodes compute the correct values for each of their coordinates in a valid vector.
		\item \soundness\ If $G$ is not distance hereditary, because Proposition \ref{prop:perm}, we can assume that Merlin provides an ordering that corresponds to a permutation with high probability. Therefore, it remains to check that $\pi$ satisfies the above properties. Indeed, each $v$ which is assigned as a leaf, trivially computes it has a unique neighbor with a higher value for $\pi$ that matches $\textsf{pending}(v)$ and therefore it accepts. For any node $v$, by assuming that for nodes $w$ removed at previous steps the vectors $(a_w^\pi, b_w^\pi)$ are correct,  $\ant(v)$ should compute that the pair sent to $v$ is correct at each step of the computation and that all nodes $u$ such that $\textsf{twin}(u)=v$ are correct. We have that the values for the set of pending nodes in-between nodes assigned as twins is correctly computed as described above. Therefore, by an inductive proof all vectors should be correctly computed. Moreover, for a valid pair which is not correct at some step of the computation, it should occur that the seed $t$ is bad for some canonical polynomial. As this family of functions is polynomially bounded beforehand, we have that at least one node should reject with high probability.
	\end{itemize}
	
\end{proof}

\section{Lower Bounds}\label{sec:lower}

In this section, we provide lower-bounds on the certificate size of distributed interactive proofs for \emph{\cograph}\ or \emph{\distHereditary}. 

\begin{theorem}\label{thm:glueing}
	If \emph{\cograph}\ or \emph{\distHereditary} is in $\dM[f(n)]$, then $f(n)=\Omega(\log n)$. Moreover if, for any fixed $k$,  $\emph{\cograph}$ or $\emph{\distHereditary} \in \dAM^{\pub}[k, g(n)]$, then $g(n)=\Omega(\log\log n)$.
\end{theorem}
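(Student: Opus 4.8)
The plan is to establish two separate lower bounds using reductions from communication complexity, following the technique from~\cite{goos2016locally} for leader detection. The central idea is that recognizing \cograph\ or \distHereditary\ on a suitably constructed graph encodes the two-party \disj\ (disjointness) or \eq\ (equality) problem, so any protocol with small certificates would yield a communication protocol violating known bounds.

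\textbf{The $\Omega(\log n)$ bound for \dM.} First I would construct, from two $k$-bit inputs $x,y$ held by Alice and Bob, a graph $G_{x,y}$ on $\Theta(n)$ nodes split into Alice's side and Bob's side, connected by a narrow cut of $\cO(1)$ edges, such that $G_{x,y}$ is a cograph (resp.\ distance-hereditary) if and only if $x$ and $y$ are equal (or disjoint, whichever is easier to realize as a forbidden $P_4$ or a forbidden induced configuration). The key gadget should force that a $P_4$ appears across the cut precisely when the encoded instance is a ``no'' instance; since a single $P_4$ suffices to destroy both the cograph and the distance-hereditary property, the same construction serves both languages. Given a \dM\ (PLS) protocol with certificate size $f(n)$, Alice and Bob simulate it by exchanging, across the cut, the certificates and broadcast messages of the $\cO(1)$ boundary nodes; this costs $\cO(f(n))$ bits of communication. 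A correct protocol then decides the underlying problem, so the $\Omega(n)$ (or $\Omega(\log n)$, matching the encoding of identifiers) communication lower bound forces $f(n)=\Omega(\log n)$. The precise count comes from encoding distinct identifiers in $\{1,\dots,\poly(n)\}$, which already requires $\Omega(\log n)$ bits per boundary message.

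\textbf{The $\Omega(\log\log n)$ bound for constant-round \dAM$^{\pub}$.} Here I would invoke the known exponential gap between deterministic and Arthur--Merlin communication. Using the same family of hard instances, a constant-round $\dAM^{\pub}[k,g(n)]$ protocol with shared randomness simulates, across the narrow cut, an $\MA$- or $\AM$-type two-party protocol of cost $\cO(g(n))$ with $\cO(1)$ rounds. Since shared randomness in the distributed setting translates to public coins in the communication model, I would appeal to the fact that the public-coin Arthur--Merlin communication complexity of the chosen base problem is $\Omega(\log k)=\Omega(\log\log n)$ (the standard doubly-logarithmic lower bound for $\AM^{\mathrm{cc}}$ on problems like \eq\ or \disj). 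Plugging $k=\Theta(n)$ (or $k=\Theta(\poly(n))$) yields $g(n)=\Omega(\log\log n)$.

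\textbf{Main obstacle.} The hard part will be designing a single gadget family $G_{x,y}$ that simultaneously (i) has a constant-size cut so that the communication simulation is faithful with overhead only $\cO(g(n))$ per boundary node, and (ii) is a cograph (resp.\ distance-hereditary) exactly on the ``yes'' instances of a base problem whose public-coin $\AM^{\mathrm{cc}}$ complexity is $\Omega(\log\log n)$. Achieving both properties at once is delicate, because forbidding a $P_4$ across a thin cut tends to impose strong structural constraints on both sides, and one must verify that no \emph{unintended} induced $P_4$ (or distance-violating induced path) arises from the gadget's internal wiring. I would resolve this by making each side a disjoint union of cliques or a cograph-friendly block, with the cut edges carrying the entire logical content of the instance, and then carefully checking that the only possible forbidden induced subgraph is the intended one signalling a ``no'' answer.
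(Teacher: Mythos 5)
Your proposal takes a genuinely different route from the paper --- a reduction to two-party communication complexity across a narrow cut --- whereas the paper uses a pigeonhole/Ramsey ``cut-and-glue'' argument: it builds yes-instances $G(F_a,b,c,d)$ with four designated boundary vertices $x_a,b,c,d$, colours the hyperedges of a complete $4$-partite $4$-uniform hypergraph by the function mapping Arthur's shared random strings to Merlin's certificates at those four vertices, extracts a monochromatic $K^{(4)}(2)$ via Erd\H{o}s' extremal hypergraph theorem, and crosses two yes-instances into a graph containing an induced $6$-cycle that every node nevertheless accepts. Your route has two concrete gaps. First, the gadget you need for \cograph\ cannot exist: a connected cograph is a join $G_1 * G_2$, so every vertex of $G_1$ is adjacent to every vertex of $G_2$, and hence any partition of a connected $n$-node cograph into two parts of size $\Theta(n)$ has a cut of size $\Omega(n)$, not $\cO(1)$. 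The yes-instances of your reduction therefore cannot simultaneously be connected cographs and have a constant-size cut separating two linear-size player sides. (This objection does not apply to \distHereditary, which contains all trees, but it already sinks half of the statement.)

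Second, the $\Omega(\log\log n)$ step rests on a communication lower bound that does not exist in the form you invoke. With public coins, \eq\ admits a randomized --- hence Arthur--Merlin --- communication protocol of cost $\cO(1)$ (hash and compare), so there is no doubly-logarithmic lower bound for it; and proving any superlogarithmic $\mathsf{AM}$-communication lower bound for an explicit function such as \disj\ is a well-known open problem. The paper obtains $\Omega(\log\log n)$ without any communication-complexity black box: with shared randomness, Merlin's behaviour at the four boundary vertices is a function from $\{0,1\}^{Kk}$ to $\{0,1\}^{Kk}$, there are at most $2^{Kk2^{Kk}}$ such functions, and unless $K=\Omega(\log\log n)$ this count is $o(n^{1/8})$, which is small enough for the pigeonhole and Erd\H{o}s steps to produce the fooling no-instance. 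For \dM\ the ``colour'' is a single tuple of certificates, of which there are only $2^{\cO(f(n))}$, and the same argument yields $f(n)=\Omega(\log n)$; no reduction to \eq\ or \disj\ is needed. If you wish to keep a reduction-style proof, you would have to replace the narrow balanced cut by a constant number of designated boundary vertices (as the paper does) and replace \eq/\disj\ by a problem for which the relevant nondeterministic or Arthur--Merlin communication bounds are actually known.
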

\begin{proof}
	We first describe a construction for lower bounds in $\dAM^{\pub}$, then we explain how to deduce stronger lower bound on $\dM$ protocols. 

	 Without loss of generality, we may assume that  $n$ is even. Let  $A$ to be a partition of $[1,n^2]$ into $n$ sets of size $n$, and set $B, C, D$ to be three disjoint sets in $[n^2+1 , 2n^2]$ of size $n$. Let $\mathcal{G}$ be a family of $n$-node cographs.
	 
	 Set $\mathcal{G}_A $  to be the set of labeled graphs in $\mathcal{G}$, with label sets picked from $A$. Let $F_a$ be a graph in $\mathcal{G}_A$
	 
	 For $(F_a, b,c,d) \in \mathcal{G}_A\times B\times C\times D$ let $G(F_a,b,c,d)$ be the graph defined by the disjoint union of graphs  $F_a$ and a triangle formed by $(b, c, d)$ plus a  join between these and an additional node $x_a$. The node $x_a$ is labeled with a different number in a set from $ [2n^2 +1, 3n^2]$ disjoint from $B, C$ and $D$.  Observe  that all nodes in $F_a$ communicate with $(b, c, d)$ only through the node $ x_a$

	\begin{figure}[!ht]
		\centering
		
		\includegraphics[width=0.4\linewidth]{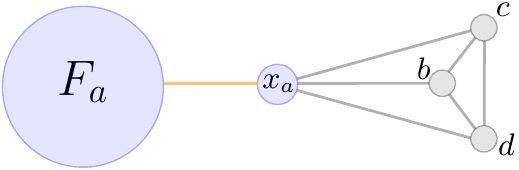}
		\caption{The auxiliary graph $G(F_a,b,c,d)$, with $(a, b, c, d)\in A\times B\times C\times D$. The orange edges represent a \emph{join} between two sets of nodes.  This is a yes-instance for \cograph, as it corresponds to a join between a disjoint union between a triangle $(b, c, d)$ and a cograph $F_a$, with the node $x_a$. As cographs are a special case of distance hereditary graphs, this implies that it is a Yes-instance for both.}
		\label{fig:lowerBoundMake}
	\end{figure}
	
	Let $\mathcal{P}$ be a $k$-round distributed interactive proof with shared randomness verifying a property $\mathrm{P}\in \{\cograph, \distHereditary\}$ with bandwidth $K = \delta \log \log n$ and error probability $\eps$. 
		
		
	Given a sequence of random strings $r=(r_1,r_2,\dots r_k)$, we call $m^r$ the sequence indexed by vertices $v\in V(G[F_a,b,c,d])$, such that $m^r_v$ is the set of certificates that Merlin sends to node $v$ in protocol $\mathcal{P}$, when Arthur communicates string $r_i$ on round~$i$. 
	Let $m_{abcd}: \{0,1\}^{Kk} \to \{0,1\}^{Kk}$ be the function that associates to each sequence $r=(r_1,r_2,\dots r_k)$ the tuple $( m^r_{x_a}, m^r_{b},m^r_{c}, m^r_{d},)$ such that it extends to a proof assignment for the nodes in $F_{a}$  that makes them accept whenever the $x_a$ accepts.
		
	Now consider the complete 4-partite, 4-uniform hyper-graph graph $\tilde{G} = A \cup B\cup C\cup D$. For each $a\in A, b\in B, c\in C$ and $d\in D$, color the edge $\{a,b,c,d\}$ with function $m_{abcd}$. There are at most $2^{Kk2^{Kk}}$ possible functions. Therefore, by the pigeonhole principle, there exists a monochromatic set of hyper-edges $H$ of size at least $\frac{n^4}{2^{Kk2^{Kk}}}$.

	Observe that  for sufficiently small $\delta$ and large $n$, $2^{Kk2^{Kk}} = (\log n)^{\delta k \log^{\delta k} (n)} = o(n^{1/{8}})$. Indeed, if $n > 2^{k\delta}$ and $\delta < 1/(2^{4}k)$  have that $ \delta k\log^{\delta k}(n) \log \log (n )\leq \log^{2\delta k}(n) <\frac{1}{8} \log n$.  Now, following a result by Erd\H{o}s~\cite{erdos1964extremal}, by setting $\ell =2, r=4$ we have that there exists a $K^{(r)}(\ell)$ subgraph in $\tilde{G}$ induced by $H$. That is, the complete $r$-uniform, $r$-partite hyper-graph, where each part has size exactly $\ell$. Let $\{a_i,b_i, c_i, d_i\}_{i=1}^2$ be the nodes involved in such a graph.

	Consider now the graph $G(a_i, b_i, c_i, d_i)$ defined as follows: First. take a disjoint union of the graphs $F_{a_1}$ and $F_{a_2}$. Then, for each $i \in \{1, 2\}$ add nodes $x_{a_{i}},  b_{i}, c_{i}$ and $ d_{i}$, labeled with different labels in $[2n^2+1, 3n^2]$  correspondent to the yes instances formed by the graphs $F_{a_i}$, and the nodes $(x_{a_i}$, $b_j$, $c_k$ and $d_h$ . 	For each $i\in \{1,2\}$, the graph $F_{a_i}$ is \emph{joined} to the node $x_{a_i}$ which, in turn, is adjacent to the nodes $b_i, c_i$ and $d_{i+1}$.  Also, the node $b_i$ is adjacent to the node $c_{i+1}$, the node $c_{i}$ is adjacent to the node $d_{i+1}$ and, finally, the node $d_i$ is adjacent to the node $b_{i+1}$. Where, in all these cases, the $i+1$ is taken$\mod 2$.	
	
	Now, we show that the graph $G(a_i, b_i, c_i, d_i)$ is a No-instance for the properties $\cograph$ and $\distHereditary$. Indeed, as our construction contains an induced 6-cycle it follows that it has an induced $P_4$ which make it a No-instance for \cograph\ as it is the class of $P_4$-free graphs. As for the problem \distHereditary, it suffices to consider another characterization for this class, namely these are the graphs such that any cycle of length at least 5 has a pair of crossing diagonals~\cite{brandstadt1999graph}. It follows that a graph with an induced 6-cycle can not be distance-hereditary.
	It remains to check that this No-instance is capable of fooling the verifier. Indeed, the nodes $x_{a_{1}}, x_{a_{2}}$ receive the same answers by Merlin which extend to assignments for the nodes in $F_{a_1}$ and $F_{a_2}$,  making them accept with the same probability as these nodes. As they locally place themselves in a previously defined yes instance, all vertices accept two thirds of all possible random coins. This contradicts the fact that $\mathcal{P}$ was a correct distributed interactive proof for $\mathrm{P}\in \{\cograph, \distHereditary\}$.	
	\begin{figure}
		\centering
		\includegraphics[width=0.8\linewidth]{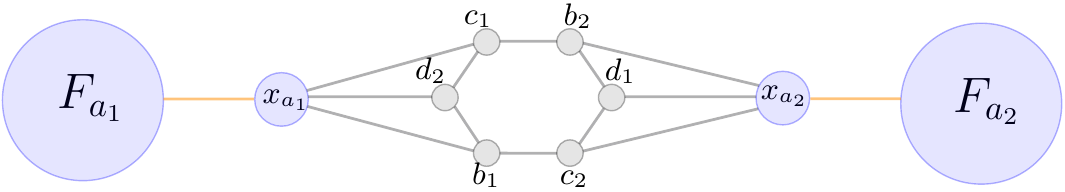}
		\caption{ A No-instance for \cograph\ and \distHereditary. By containing a 6-cycle formed by the nodes $b_i, c_i$ and $d_i$ it has two induced paths of different length between the nodes $c_2$ and $d_2$ and, therefore, can not be distance-hereditary nor a cograph.}
	\end{figure}
\end{proof}

Is important to mention that this technique can not provide lower bounds stronger than $\Omega(\log n)$ for the $\dM$ class or $\Omega (\log\log n)$ for the $\dAM[k]$ class as a natural barrier in this construction is that each configuration admits identifiers in a polynomially bounded set, and therefore the previously defined sets $A$ and $B$ can not be too large.
If we recycle the graph constructions from Theorem~\ref{thm:glueing} and follow a result by Fraigniaud et al.~\cite{fraigniaud2019distributed}, we can also obtain ``strong'' lower bounds for the models \dMA\ in both its shared and private randomness versions if the acceptance probability is sufficiently high.
\begin{corollary}\label{cor:lower}
	If any of the  problems \emph{\cograph}\ or \emph{\distHereditary}\ is in $\dMA_{1/7}[f(n)]$, then $f(n) = \Omega(\log n)$. 
\end{corollary}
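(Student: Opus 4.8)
The plan is to recycle, essentially verbatim, the gadget construction from the proof of Theorem~\ref{thm:glueing}: the yes-instances $G(F_a,b,c,d)$ built over the partition $A$ and the disjoint sets $B,C,D$, together with the glued no-instance $G(a_i,b_i,c_i,d_i)$ obtained from a monochromatic $K^{(4)}(2)$, whose induced $6$-cycle already certifies that it lies outside both \cograph\ and \distHereditary. Only two things change relative to Theorem~\ref{thm:glueing}: the protocol is now Merlin--Arthur, and the error is assumed to be below $1/7$. I would keep the hypergraph-colouring skeleton intact and re-examine only the colour count and the acceptance probability.

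The first key point, following \cite{fraigniaud2019distributed}, is that in a \dMA\ protocol Merlin commits to his certificate \emph{before} Arthur draws any randomness. Consequently the colour $m_{abcd}$ assigned to the hyperedge $\{a,b,c,d\}$—the tuple of certificates placed on $x_a,b,c,d$ that extends to an accepting assignment on $F_a$—is now a fixed bit-string of length $\cO(f(n))$, and not a function of the random coins as it was in the \dAM\ analysis. Hence there are at most $2^{\cO(f(n))}$ colours, a \emph{single} exponential rather than the double exponential that confined the \dAM\ bound to $\Omega(\log\log n)$. Feeding this single-exponential count into the same pigeonhole-plus-Erd\H{o}s step as in Theorem~\ref{thm:glueing} shows that a certificate size $f(n)=o(\log n)$ still produces a monochromatic $K^{(4)}(2)$, and therefore the glued no-instance; this is exactly the mechanism that upgrades the barrier from $\log\log n$ to $\log n$.

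It then remains to argue that the no-instance fools the verifier, and here the randomness of the verification round is what forces the threshold $1/7$. By monochromaticity the certificates are globally consistent, so every node of the no-instance has a local view—its own certificate and identity together with the certificates broadcast by its neighbours—identical to its view in one of six specific yes-instances drawn from the $K^{(4)}(2)$: two of them absorb the blocks $F_{a_1}\cup\{x_{a_1}\}$ and $F_{a_2}\cup\{x_{a_2}\}$, and a direct check shows these two simultaneously match the views of the cycle-vertices $d_1,d_2$, while the remaining four are needed one apiece for the rewired neighbourhoods of $b_1,b_2,c_1,c_2$. Completeness guarantees that in each of these six yes-instances all nodes accept simultaneously with probability at least $1-\eps$ over the (shared or private) coins, so on the no-instance the event ``some node rejects'' is contained in a union of six events, each of probability at most $\eps$. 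A union bound then yields global acceptance probability at least $1-6\eps$, which exceeds $\eps$ precisely when $\eps<1/7$, contradicting soundness. Since the bound on Merlin's certificate is independent of the coins, the identical count applies to both the shared- and private-randomness settings, as asserted.

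The hard part will be the bookkeeping of the third paragraph: one must verify, neighbourhood by neighbourhood, that exactly six yes-instances suffice to cover every node of the rewired no-instance with a view that agrees \emph{on the nose}, since it is this count of six—rather than the naive count of eight, one per glued node—that delivers the stated $1/7$ and not a weaker constant. Checking that the two blocks $F_{a_i}\cup\{x_{a_i}\}$ can be amortised against the pendant cycle-vertices $d_1,d_2$, and that the four remaining cycle-vertices genuinely demand four further distinct gadgets, is the delicate step; everything else is a direct transcription of Theorem~\ref{thm:glueing}.
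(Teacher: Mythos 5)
Your proposal is correct and follows essentially the same route as the paper: it recycles the $G(F_a,b,c,d)$ gadgets and the monochromatic $K^{(4)}(2)$, observes that in \dMA\ the colour is a fixed certificate tuple (hence a single-exponential colour count, which is what lifts the bound to $\Omega(\log n)$), and concludes via the same union bound over six yes-instances covering $F_{a_1}\cup\{d\}$, $F_{a_2}\cup\{d\}$, $b_1$, $b_2$, $c_1$, $c_2$, yielding the $1/7$ threshold. The six-event bookkeeping you flag as the delicate step is exactly the decomposition the paper writes down.
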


\begin{proof}
	We use the constructions from Theorem~\ref{thm:glueing} and change the ways each edge in the auxiliary (hyper) graph is colored. By bounding the acceptance probability according to what each node in the construction can see, we obtain the result.

	Suppose we have a protocol $\mathcal{P}$ in \dMA\ with acceptance probability $\eps\leq \frac17$. We repeat again the construction $G(F_a, b, c, d)$ with $(a, b, c, d)\in A\times B\times C\times D$, as well as the 4-partite hyper-graph $\tilde{G}= A\cup B\cup C \cup D$. Using the result by Erd\H{o}s~\cite{erdos1964extremal} we obtain a monochromatic $K^{(4)}(2)$ graph colored by the proof $\bar{m}$. Now in order to study the acceptance probability of $G(a_i, b_i, c_i, d_i)$, we define $G_{ijkh}$ to be the graph $G(a_i, b_j, c_k, d_h)$, from here we bound the acceptance probability as
	
	\begin{align*} \mathbf{Pr}[\text{Some  }v\text{ in  }V \text{ rejects}]&\leq \mathbf{Pr}[\text{Some }v\text{ in  }V(F_{a_1})\cup\{d_2\} \text{ rejects at  }G_{1112}]\\
	&+\mathbf{Pr}[\text{Some }v\text{ in  }V(F_{a_2})\cup\{d_2\} \text{ rejects at  }G_{2221}]\\
	&+\mathbf{Pr}[\text{Some }v\text{ in  } \{c_1\} \text{ rejects at  }G_{1212}]\\
	&+\mathbf{Pr}[\text{Some }v\text{ in  }\{c_2\} \text{ rejects at  }G_{2121}]\\
	&+\mathbf{Pr}[\text{Some }v\text{ in  }\{b_1\} \text{ rejects at  }G_{1122}]\\
	&+\mathbf{Pr}[\text{Some }v\text{ in  }\{b_2\} \text{ rejects at  }G_{2211}]\\
	&< \frac67\end{align*}
	Once again each term represents a portion of the graph that accepts with good probability (at least $6/7$) as any combination of the above vicinities is considered in the monochromatic $K^{(r)}(\ell)$. Finally, we have a contradiction on $\mathcal{P}$'s correctness as $G(a_i,b_i,c_i,d_i)$ is a  bad instance that should accept with probability smaller than $1/7$. 
\end{proof}

\section{Discussion}

We have shown that cographs and distance-hereditary graphs can be recognized with two-round distributed interactive protocols 
using  logarithmic certificates and shared randomness.  A natural step forward is to improve our results by using fewer rounds or smaller certificates. Interestingly, our lower-bounds show that the two types of improvements are not possible to achieve simultaneously. Indeed, we have shown that, even with a fixed number of rounds  (\dM\ (PLS) or \dMA)   recognizing these classes require certificates of size at least logarithmic.

Another interesting question is the existence of a non-trivial \dM\ protocol for \cograph. In fact, we have not been able to provide a sub-linear one-round interactive potocol for this problem (the fact that  $\cograph\in \dM[n] $ is trivial). On the other hand, from Lemma~\ref{BCC}, 
it follows that any one-round deterministic protocol in the \textsf{1BCC}  model recognizing cographs, would immediately imply a \dM\ protocol for \cograph. 

Currently, it is not  known whether
recognizing cographs can be done through a deterministic \textsf{1BCC} protocol. Yet, we take note  of the following corollary of Lemma~\ref{lem:routing} which may be of interest, in case that there exists a positive result in the future.

\begin{corollary}
	Any non-deterministic protocol for \emph{\cograph}\ in the \emph{\textsf{1BCC}} model with bandwidth $L$ would imply a \dM\ protocol with bandwidth $L+ \cO(\log n)$.
\end{corollary}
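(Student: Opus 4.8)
The plan is to invoke Lemma~\ref{lem:routing} directly, since it already packages the entire routing argument needed here. Recall that Lemma~\ref{lem:routing} states that any $\dM$ protocol with bandwidth $L$ can be transformed into a $\dM$ protocol with bandwidth $L + \cO(\log n)$ in which a single distinguished node $\rho$ gains access to all messages broadcast in the network. The key observation is that a non-deterministic \textsf{1BCC} protocol is exactly a protocol in which a referee, starting with no knowledge of the graph, receives one broadcast message of size $L$ from each node (together with a Merlin-provided certificate) and then decides the predicate. Thus, if such a protocol recognizes \cograph\ with bandwidth $L$, the natural strategy is to simulate its referee using the node $\rho$ produced by the routing lemma.

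First I would apply Lemma~\ref{BCC} to the input cograph $G$ to obtain a depth-two spanning tree with the structure described there, fixing the root $\rho$ in the standardized manner (e.g.\ the smallest identifier in the appropriate side of the bipartition), exactly as in the discussion following Lemma~\ref{BCC}. Second, I would have Merlin provide each node with the certificate it would receive in the hypothesized \textsf{1BCC} protocol, of size $L$. Third, invoking Lemma~\ref{lem:routing}, each depth-one node forwards its own $L$-bit message together with the $L$-bit message of its unique child, plus the identifiers needed for consistency checks, so that $\rho$ collects the entire set of broadcast messages of the simulated \textsf{1BCC} protocol; this costs an additional $\cO(\log n)$ bits per edge, giving total bandwidth $L + \cO(\log n)$. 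Finally, $\rho$ runs the \textsf{1BCC} referee's decision function on the collected messages and accepts or rejects accordingly, while all other nodes accept (after verifying the local consistency of the forwarded messages, which is built into Lemma~\ref{lem:routing}).

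For \textbf{completeness}, when $G$ is a cograph the honest Merlin supplies the certificates prescribed by the \textsf{1BCC} protocol; the routing lemma guarantees $\rho$ sees precisely the broadcast messages the referee would see, so the referee's acceptance carries over. For \textbf{soundness}, if $G$ is not a cograph, then for every choice of certificates the \textsf{1BCC} referee rejects; since $\rho$ faithfully simulates that referee on whatever messages it receives, $\rho$ rejects, and the local consistency checks of Lemma~\ref{lem:routing} prevent Merlin from decoupling the forwarded messages from the true ones. Hence the simulation recognizes \cograph\ in $\dM[L + \cO(\log n)]$.

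There is essentially no serious obstacle here, since Lemma~\ref{lem:routing} was stated and proved precisely to enable this kind of simulation; the only point requiring a moment's care is that Lemma~\ref{BCC} applies to \emph{connected} cographs, so one should note that for a disconnected input the referee's decision can still be organized (for instance by running the construction within each connected component, or by observing that a node can reject immediately if the expected depth-two tree structure fails to materialize on a genuine cograph). Given that the preceding text already exploits exactly this reduction to obtain $\cograph \in \dAM^{\pub}[\log n]$ from the randomized \textsf{1BCC} protocol of~\cite{kari2015solving}, the corollary is simply the deterministic analogue and follows immediately.
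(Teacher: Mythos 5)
Your proposal is correct and follows exactly the route the paper intends: the corollary is stated as an immediate consequence of Lemma~\ref{lem:routing} (built on the depth-two spanning tree of Lemma~\ref{BCC}), with the root $\rho$ collecting all $L$-bit broadcast messages at an extra cost of $\cO(\log n)$ bits and then simulating the \textsf{1BCC} referee. Your additional remarks on soundness when the tree structure cannot be certified, and on disconnected inputs, are sensible but do not change the argument.
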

This adds a new perspective either for the search of protocols (simulating a protocol in the \textsf{1BCC} model following Lemma~\ref{BCC}) or searching for super-logarithmic lower-bounds for one-round interactive proofs, implying lower bounds in the \textsf{1BCC} model.

\bibliography{i-ref}

\end{document}